\sffamily\raisebox{1.8cm}{\hspace{1.8cm}
978-1-4244-2677-5/08/\$25.00
2008 IEEE }}
\begin{document}

\newtheorem{theorem}{Theorem}[section]
\newtheorem{lemma}[theorem]{Lemma}
\newtheorem{conjecture}[theorem]{Conjecture}
\newtheorem{corollary}[theorem]{Corollary}
\newtheorem{definition}[theorem]{Definition}
\newtheorem{scheme}[theorem]{Scheme}

\title{On the Capacity Improvement of Multicast Traffic with Network Coding}

\author{\IEEEauthorblockN{Zheng Wang\IEEEauthorrefmark{2},
Shirish Karande\IEEEauthorrefmark{2}, Hamid R.
Sadjadpour\IEEEauthorrefmark{2}, J.J.
Garcia-Luna-Aceves\IEEEauthorrefmark{3}}
\IEEEauthorblockA{Department of Electrical
Engineering\IEEEauthorrefmark{2} and Computer
Engineering\IEEEauthorrefmark{3}
\\
University of California, Santa Cruz, 1156 High Street, Santa Cruz, CA 95064, USA\\
$^\ddagger$ Palo Alto Research Center (PARC), 3333 Coyote Hill Road,
Palo Alto, CA 94304, USA \\
 Email:{\{wzgold, karandes,
hamid, jj\}@soe.ucsc.edu}} }

\maketitle

\begin{abstract}

In this paper, we study the contribution of network coding (NC) in
improving the multicast capacity of random wireless ad hoc networks
when nodes are endowed with multi-packet transmission (MPT) and
multi-packet reception (MPR) capabilities. We show that a per session
throughput capacity of $\Theta\left(nT^{3}(n)\right)$, where $n$ is
the total number of nodes and $T(n)$ is the communication range, can be
achieved as a tight bound when each session contains a constant number
of sinks. Surprisingly, an identical order capacity can be achieved
when nodes have only MPR and MPT capabilities. This result proves that
NC does not contribute to the order capacity of multicast traffic in
wireless ad hoc networks when MPR and MPT are used in the network. The
result is in sharp contrast to the general belief (conjecture) that NC
improves the order capacity of multicast. Furthermore, if the
communication range is selected to guarantee the connectivity in the
network, i.e., $T(n)\ge \Theta\left(\sqrt{\log n/n}\right)$, then the
combination of MPR and MPT achieves a throughput capacity of
$\Theta\left(\frac{\log^{\frac{3}{2}} n}{\sqrt{n}}\right)$ which
provides an order capacity gain of $\Theta\left(\log^2 n\right)$
compared to the point-to-point multicast capacity with the same number
of destinations.

\end{abstract}
\IEEEpeerreviewmaketitle

\section{Introduction}

The seminal work by Gupta and Kumar \cite{GuKu00} has sparked a
growing amount of interest in understanding the fundamental capacity
limits of wireless ad hoc networks. Several techniques \cite{OLT07,
MSG07, GMPS07} have been developed with the objective of improving the
capacity of wireless ad hoc networks. Network coding (NC), which was
originally proposed by Ahlswede et al. in \cite{ANLY00}, is one such
technique. Unlike traditional store-and-forward routing, network
coding encodes the messages received at intermediate nodes,
prior to forwarding them to subsequent next-hop neighbors. Ahlswede et
al. \cite{ANLY00} showed that network coding can achieve a multicast
flow equal to the min-cut for a single source and under the
assumptions of a directed graph.  This and other work in
NC \cite{LiYeCa03, KoMe03} has motivated a large number of
researchers to investigate the impact of NC in increasing the
throughput capacity of wireless ad hoc networks. However, Liu et
al. \cite{LGT07} recently showed that NC does not increase the order
of the throughput capacity for multi-pair unicast
traffic. Nevertheless, a number of efforts (analog network
coding~\cite{KGK07}, physical network coding~\cite{ZLL06}) have
continued the quest for improving the multicast capacity of ad-hoc
networks by using NC.  Despite the claims of throughput improvement by
such studies, the impact of NC on the multicast scaling law has
remained uncharacterized.

Promising approaches \cite{KGK07, ZLL06} implicitly assume the
combination of NC  with Multi-packet Transmission (MPT) and
Multi-packet Reception (MPR) \cite{RSW05,AKMK07,KASYK07} (i.e., the
ability to transceive successfully multiple concurrent transmissions
by employing physical-layer interference cancelation techniques).
MPR has been shown to increase the capacity regions of ad hoc
networks~\cite{ToGo03}, and very recently Garcia-Luna-Aceves et
al.~\cite{GSW07b} have shown that the order capacity in wireless ad
hoc networks subject to multi-pair unicast traffic is increased with
MPR. These prior efforts raise three important following questions:
(a) What is the multicast throughput order achieved by the
combination of NC with MPT and MPR? (b) Does this combination
provide us with an order gain over traditional techniques based on
routing and point-to-point communication? (c) If yes, what exactly
leads to this gain? Is NC necessary or does the combination of MPT
and MPR suffice?

In this work, we address the above three questions. The answers can be
summarized by our main results:
\begin{itemize}
\item When each multicast group consists of a constant number of
  sinks, the combination of NC, MPT and MPR provides a per session
  throughput capacity of $\Theta(nT^3(n))$, where $T(n)$ is the
  communication range.
\item This scaling law represents an order gain of $\Theta(n^2T^4(n))$ over a combination of routing and single packet transmission/reception.
\item The combination of only MPT and MPR is sufficient to achieve
  a per-session multicast throughput order of $\Theta(nT^3(n))$.
  Consequently, NC does not contribute to the multicast capacity when MPR and MPT are used in the network!
\end{itemize}

The remainder of this paper is organized as follows. In Section
\ref{sec:review}, we give an overview of capacity analysis for NC,
MPT, MPR, and other existing techniques. In Section \ref{sec:def},
we introduce the models we used. In Section \ref{sec:mpt-mpr} and
\ref{sec:nc-mpt-mpr}, we give our main results with MPT and MPR when
network coding is not used and used respectively. We conclude our paper in
Section \ref{sec:dis}.

\section{Literature Reviews} \label{sec:review}

Gupta and Kumar in their seminal paper \cite{GuKu00} proved that the
throughput capacity in wireless ad hoc network is not scalable.
Subsequently, many researchers have focused on identifying
techniques that could alter this conclusion.
Recently, Ozgur et al. \cite{OLT07} proposed a hierarchical
cooperation technique based on virtual MIMO to achieve linear per
source-destination capacity. Cooperation can be extended to the
simultaneous transmission and reception at the various nodes in the
network, which is called {\em many-to-many communication} and can
result in significant improvement in capacity \cite{MSG07}.

Since the original work by Ahlswede et al.  \cite{ANLY00}, most of the
research on network coding has focused on directed networks, where
each communication link has a fixed direction.  Li and Li
\cite{LiLi04} were the first to study the benefits of network coding
in undirected networks, where each communication link is
bidirectional. Their result \cite{LiLi04} shows that, for a single
unicast or broadcast session, there are no improvement with respect
to throughput due to network coding. In the case of a single
multicast session, such an improvement is bounded by a factor of
two. Meanwhile, \cite{RSW05, AKMK07, KASYK07} studied the throughput
capacity of NC in wireless ad hoc networks. However \cite{RSW05,
AKMK07, KASYK07} employ network models that are fundamentally
inconsistent with the more commonly accepted assumptions of ad-hoc
networks \cite{GuKu00}. Specifically, the model constraints of
\cite{LiLi04, LLL06, RSW05, AKMK07, KASYK07} differ as follows: All
the prior works assume a single source for unicast, multicast or
even broadcast. Aly et al. \cite{AKMK07} and Kong et al.
\cite{KASYK07} differentiate the total nodes into source set, relay
set and destination set. They do not allow all of the nodes to
concurrently serve as sources, relays or destinations, as allowed in
the work by Gupta and Kumar \cite{GuKu00}. An even bigger limitation
of these results is that they do not consider the impact of
interference in wireless ad hoc networks.

In the absence of interference, the communication
scenario equates an ideal case where a node can simultaneously  transmit
and receive from multiple nodes. Interference cancelation techniques
such as  MPT and MPR indeed enable nodes with the ability of
multi-point communication within a communication range of $T(n)$.
Thus, the model assumptions in \cite{RSW05, AKMK07, KASYK07} at the
very least assume that nodes are capable of MPT and MPR. Similarly,
works such as Physical-Layer Network Coding (PNC) \cite{ZLL06}
and Analog Network Coding \cite{KGK07}
also implicitly assume the ability of MPT and MPR.

\section{Network Model, Definitions, and Preliminaries}\label{sec:def}

We assume a random wireless ad hoc network with $n$ nodes
distributed uniformly in a unit-square network area. Our capacity
analysis is based on the protocol model for dense networks,
introduced by Gupta and Kumar \cite{GuKu00}. The case of what we
call point-to-point communication corresponds to the original
protocol model.

\begin{definition} The Protocol Model of Point-to-Point Communication:
All nodes use a common transmission range $r(n)$ for all their
communication.
Node $X_i$ can successfully transmit to node $X_j$ if for any node
$X_k, k \neq i$, that transmits at the same time as $X_i$ it is true
that $|X_i - X_j |  \le r(n)$ and $|X_k - X_j| \ge (1 +
\Delta)r(n)$.
\end{definition}

We make the following extensions to account for MPT and MPR
capabilities at the transmitters and receivers, respectively. In
wireless ad hoc networks with MPT (MPR) capability, any transmitter (receiver) node
can transmit (receive) different information simultaneously to (from) multiple nodes
within the circle whose radius is $T(n)$ \cite{GSW07b}. We further assume that nodes
cannot transmit and receive at the same time, which is equivalent to half-duplex communications
\cite{GuKu00}. From system point of view, MPT and MPR are dual if we
consider the source and destination duality.

\begin{definition} {\em Feasible throughput capacity \\}
In a wireless ad hoc network of $n$ nodes where each source
transmits its packets to $m$ destinations, a throughput of $C_m(n)$
bits per second for each node is feasible if there is a spatial and
temporal scheme for scheduling transmissions, such that, by operating
the network in a multi-hop fashion and buffering at intermediate
nodes when awaiting transmission, every node can send $C_m(n)$ bits
per second on average to its $m$ chosen destination nodes. That is,
there is a $T < \infty$ such that in every time interval $[(i-1)T,
iT ]$ every node can send $TC_m(n)$ bits to its corresponding
destination nodes.
\end{definition}

\begin{definition} {\em Order of throughput capacity }\\
$C_{m}(n)$ is said to be of order $\Theta(f(n))$ bits/second if
there exist deterministic positive constants $c$ and $c'$ such that
\begin{equation} \label{eq:throughput_capacit}
\left\{ \begin{aligned}
         \lim_{n\rightarrow\infty}\textrm{ Prob }(C_{m}(n)=cf(n) \textrm{ is feasible}) &= 1 \\
                    \lim_{n\rightarrow\infty}\textrm{ Prob }(C_{m}(n)=c'f(n) \textrm{ is
                    feasible})&<1.
                          \end{aligned} \right.
\end{equation}
\end{definition}

\begin{definition} {\em Euclidean Minimum Spanning Tree (EMST)} \\
Consider a connected undirected graph $G=(V, E)$, where $V$ and $E$
are sets of vertices and edges in the graph $G$, respectively.  The
EMST of $G$ is a spanning tree of $G$ with the minimum sum of
Euclidean distances between connected vertices of this tree.
\end{definition}

\begin{definition} {\em Minimum Euclidean Multicast Tree \\
$\left(\textrm{MEMT}\left(T(n)\right)\right)$:~} The
MEMT$\left(T(n)\right)$ is a multicast tree in which the $m$
destinations receive information from the source and this multicast
tree has the minimum total Euclidean distance.
\end{definition}

\begin{definition} {\em Minimum Area Multicast Tree~\\
$\left(\textrm{MAMT}\left(T(n)\right)\right)$:~} The
MAMT$\left(T(n)\right)$ in a multicast tree with $m$ destinations
for each source is a multicast tree that has minimum total area.
Area of a multicast tree is defined as the total area covered by
circles centered around each source or relay with radius of $T(n)$
(see Fig. \ref{fig:multicasttree}).
\end{definition}

\begin{figure}[http]
    \center
      \includegraphics[width=2.3in]{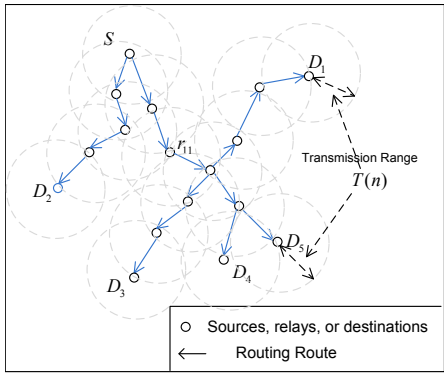}\\
      \caption{Area Coverage By One Multicast Tree}
    \label{fig:multicasttree}
\end{figure}

Note that EMST and MEMT are spanning trees which includes only
source and destinations, while MAMT is related to a real routing
tree which includes relays.

\begin{definition} {\em Total Active Area~
$\left(\textrm{TAA}\left(\Delta, T(n)\right)\right)$:~}\\
 The
TAA$\left(\Delta,T(n)\right)$ is the total  area of the network
multiplied by the average maximum number of simultaneous
transmissions and receptions inside a communication region of
$\Theta(T^2(n))$.
\end{definition}

It can be shown that this value has an upper bound of $O(1)$,
$O(nT^2(n))$ and $O(n^2 T^4(n))$ for point-to-point, MPR (or MPT)
and MPR combined with MPT respectively.

In the rest of this paper, $\| T \|$ denotes the total Euclidean
distance of a tree $T$; $\# T$ is used to denote the total number of
vertices (nodes) in a tree $T$; $S(T)$ denotes the area of tree $T$
covered; and $\overline{\| T \|}$ is used for the statistical
average of the total Euclidean distance of a tree.

To compute the multicast capacity, we use the relationship between
MAMT and EMST. Steele \cite{St88} determined a tight bound for
$\overline{\|\textrm{EMST}\|}$ in the following lemma.

\begin{lemma}\label{lem:EMST}
Let $f(x)$ denote the node probability distribution function in the
network area. Then, for large values of $m$ and $d > 1$, the
$\overline{\|\textrm{EMST}\|}$ is tight bounded as
\begin{eqnarray}\label{eq:EMST}
     \overline{\|\textrm{EMST}\|} &=&
     \Theta\left(c(d)m^{\frac{d-1}{d}}\int_{R^d}f(x)^{\frac{d-1}{d}}dx\right),
\end{eqnarray}
where $d$ is the dimension of the network. Note that both $c(d)$ and
the integral are constant values and not functions of $m$. When
$d=2$, then $\overline{\|\textrm{EMST}\|}= \Theta \left(
\sqrt{m}\right)$.
\end{lemma}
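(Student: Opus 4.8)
The plan is to follow the Beardwood--Halton--Hammersley (BHH) paradigm for Euclidean functionals, in the form Steele developed for the minimum spanning tree. The central object is the MST-length functional $L(\{x_1,\dots,x_m\})$, and the argument rests on three structural properties of $L$: translation invariance, $L(X+y)=L(X)$; homogeneity of degree one, $L(\alpha X)=\alpha L(X)$ for $\alpha>0$ (distances scale linearly); and approximate subadditivity under partitioning of the cube. The strategy has two stages: first pin down the uniform case $f\equiv 1$ on $[0,1]^d$, then transfer to a general density $f$ by a cube-partition and approximation argument.

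First I would treat the uniform case. Partition $[0,1]^d$ into $k^d$ congruent subcubes of side $1/k$. A spanning tree of the whole point set can be built from the subtrees within each subcube joined by $O(k^{d-1})$ extra edges of length $O(1/k)$, which yields the subadditive bound $L(U_1,\dots,U_m)\le \sum_i L(\text{points in cube }i)+C\,k^{d-1}\cdot\tfrac{1}{k}$. Each subcube is, after rescaling by $1/k$, a copy of the original problem carrying on average $m/k^d$ points, so homogeneity gives $E[L(\text{cube }i)]=\tfrac{1}{k}\,\phi(\lfloor m/k^d\rfloor)$, where $\phi(m):=E[L(U_1,\dots,U_m)]$. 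These relations, run across refinement levels, force $\phi(m)\sim c(d)\,m^{(d-1)/d}$ for a dimensional constant $c(d)$, the exponent being the only one consistent with the self-similar scaling of $\phi(m)/m^{(d-1)/d}$.

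Next I would pass to a general density $f$. Cover the support by a fine grid of cubes $Q$ of side $\varepsilon$ on which $f$ is essentially constant, $f\approx f_Q$. Conditioned on their number, the points falling in $Q$ are uniform there, so the uniform result contributes an MST length $c(d)\,(m f_Q\varepsilon^d)^{(d-1)/d}\cdot\varepsilon = c(d)\,m^{(d-1)/d} f_Q^{(d-1)/d}\varepsilon^{d}$ from that cube. Summing over cubes and recognizing a Riemann sum gives $c(d)\,m^{(d-1)/d}\int f(x)^{(d-1)/d}\,dx$, exactly as claimed; the edges stitching the per-cube trees together total only $O(m^{(d-1)/d}\varepsilon)$, negligible once $\varepsilon\to 0$ after $m\to\infty$. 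Specializing to $d=2$ with $f$ uniform on the unit square turns the integral into a constant and the exponent $\tfrac{d-1}{d}$ into $\tfrac12$, so $\overline{\|\textrm{EMST}\|}=\Theta(\sqrt{m})$.

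The hard part will be the matching lower bound, which is what makes the estimate $\Theta$ rather than merely $O$. Subadditivity controls $L$ only from above, and by itself cannot prevent the limiting constant from degenerating. To close the gap I would introduce the canonical boundary functional $L_B$---the tree length computed when points lying on a cube's boundary may be connected at no cost---which is superadditive under the same partition and is therefore bounded below by the sum of its subcube values. One then shows that $L$ and $L_B$ differ by only $O(k^{d-1}/k)$ per partition level, so both share the limit $c(d)$ and the two-sided bound follows. Verifying that this boundary correction is genuinely lower order, uniformly over configurations, is the delicate point; it is precisely this step that certifies tightness.
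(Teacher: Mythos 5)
The paper offers no proof of this lemma at all---it is quoted directly from Steele's work \cite{St88}---and your sketch is a faithful outline of exactly the argument that underlies that cited result: subadditivity plus homogeneity for the upper bound, the superadditive boundary functional for the matching lower bound, and the cube-partition Riemann-sum transfer from the uniform case to a general density $f$. So your proposal is correct and takes essentially the same route as the source the paper relies on; the only caveat is that the genuinely delicate step you flag (showing the boundary correction is uniformly lower order) is precisely where the real work lies, and a complete write-up would need to carry it out rather than assert it.
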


Given that the distribution of nodes in a random network is uniform,
if there are $n$ nodes in a unit square, then the density of nodes
equals $n$. Hence, if $|S|$ denotes the area of space region S, the
expected number of the nodes, $E(N_S)$, in this area is given by
$E(N_S) = n|S|$. Let $N_j$ be a random variable defining the number
of nodes in $S_j$ . Then, for the family of variables $N_j$, we have
the following standard results known as the Chernoff
bounds~\cite{MoRa95}:

\begin{lemma} Chernoff bound

For any $0 < \delta < 1$, we have
\begin{equation}\label{eqn:chernoff}
P[|N_j-n|S_j|| > \delta n|S_j|] < e^{-\theta n|S_j|}.
\end{equation}
\end{lemma}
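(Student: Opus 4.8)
The plan is to recognize $N_j$ as a sum of independent indicator variables and then apply the standard exponential-moment (Chernoff) technique to each tail separately before combining them. Because the $n$ nodes are placed independently and uniformly in the unit square, each node falls inside the region $S_j$ independently with probability $|S_j|$. Writing $N_j = \sum_{i=1}^{n} X_i$, where $X_i$ is the indicator that node $i$ lands in $S_j$, we see that $N_j$ is a Binomial variable with mean $\mu := n|S_j|$ and that the $X_i$ are mutually independent.

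First I would bound the upper tail. For any $t>0$, Markov's inequality applied to $e^{tN_j}$ gives $P[N_j \geq (1+\delta)\mu] \leq e^{-t(1+\delta)\mu}\,E[e^{tN_j}]$, and independence factors the expectation as $\prod_{i} E[e^{tX_i}] = \bigl(1 + |S_j|(e^{t}-1)\bigr)^{n} \leq \exp\bigl(\mu(e^{t}-1)\bigr)$, using $1+x \leq e^{x}$. Optimizing over $t$ by setting $t = \ln(1+\delta)$ yields the classical estimate $P[N_j \geq (1+\delta)\mu] \leq \bigl(e^{\delta}/(1+\delta)^{1+\delta}\bigr)^{\mu}$, which for $0<\delta<1$ simplifies to $e^{-\mu\delta^2/3}$. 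An entirely symmetric argument with $t<0$ (equivalently, bounding $e^{-tN_j}$) handles the lower tail and gives $P[N_j \leq (1-\delta)\mu] \leq e^{-\mu\delta^2/2}$.

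Combining the two one-sided estimates by a union bound produces $P[\,|N_j - \mu| > \delta\mu\,] \leq 2\, e^{-\mu\delta^2/3}$. To match the stated form $e^{-\theta\mu}$, I would absorb the leading factor of $2$ into the exponent: since $\mu = n|S_j| \to \infty$ in the regime of interest, any constant $\theta = \theta(\delta) < \delta^2/3$ (for instance $\theta = \delta^2/4$) guarantees $2\,e^{-\mu\delta^2/3} < e^{-\theta\mu}$ for all sufficiently large $\mu$, which is exactly the claimed inequality \eqref{eqn:chernoff}.

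The only delicate point is packaging both tails under a single constant $\theta$, and there is no genuine obstacle here beyond verifying that the two exponents $\delta^2/3$ and $\delta^2/2$ are strictly positive for $0<\delta<1$ and that the union-bound factor can be subsumed by taking $\theta$ strictly below the smaller exponent. Everything else is the textbook moment-generating-function computation, and the result appears in full generality in \cite{MoRa95}.
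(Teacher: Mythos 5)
Your proof is correct: the paper itself offers no proof of this lemma, simply citing it as a standard result from \cite{MoRa95}, and your exponential-moment derivation (binomial decomposition, Markov on $e^{tN_j}$, optimization at $t=\ln(1+\delta)$, union bound over the two tails) is precisely the textbook argument that reference contains. Your handling of the factor of $2$ by shrinking $\theta$ and using $n|S_j|\to\infty$ is the right way to match the stated single-exponential form, and it is consistent with how the paper uses the bound.
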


Therefore, for any $\theta> 0$, there exist constants such that
deviations from the mean by more than these constants occur with
probability approaching zero as $n\rightarrow \infty$. It follows
that, w.h.p., we can get a very sharp concentration on the number of
nodes in an area, so we can find the achievable lower bound w.h.p.,
provided that the upper bound (mean) is given. In the following
sections, we first derive the upper bound, and then use the Chernoff
bound to prove the achievable lower bound.

In \cite{ANLY00} it was proved that the max-flow min-cut is equal to
multicast capacity of a directed graph with single source. The
directed graph model is more applicable for wired networks. However,
in this work we wish to study the utility of NC in a wireless
environment where links are bidirectional \cite{RSW05, AKMK07}.

In a single-source network, the cut capacity is equal to the maximum
flow. Thus \cite{AKMK07} provides an upper bound on the multicast
capacity of a network with single source and NC+MPT+MPR capability.
However, in \cite{RSW05, AKMK07, KASYK07}, the source, relays and
destinations are strictly different and information can not be
transmitted directly towards the destinations. These two assumptions
will be eventually relaxed in this paper.

\section{The Throughput Capacity with MPT and MPR}\label{sec:mpt-mpr}

We now analyze the scaling laws in random
geometric graphs with MPT and MPR abilities.  Wang et
al. \cite{WSG08a} proved the unifying capacity with point-to-point
communication, which resolves the general multicast case with $m$
destinations for each source being a function of $n$. Here, we use a
similar approach to prove the capacity with MPT and MPR when $m$ is
not a function of $n$ but a constant.
\subsection{Upper Bound} \label{subsec:upper}

The following Lemma provides an upper bound for the per-session
capacity as a function of $\overline{\textrm{TAA}(\Delta, T(n))}$
and $\overline{\textrm{MAMT}\left(T(n)\right)}$. Essentially,
$\overline{S\left(\textrm{MAMT}(T(n))\right)}$ equals the minimum
area consumed to multicast a packet to $m$ destinations (see Fig.
\ref{fig:multicasttree}), and $\overline{\textrm{TAA}(\Delta,
T(n))}$ represents the maximum area which can be supported when MPT
and MPR are used.

\begin{lemma}\label{lem:important1}
In random dense wireless ad hoc networks, the per-node throughput
capacity of multicast with MPT and MPR is given by
$O\left(\frac{1}{n} \times \frac{\overline{\textrm{TAA}(\Delta,
T(n))}}{\overline{S(\textrm{MAMT}(T(n))})}\right)$.
\end{lemma}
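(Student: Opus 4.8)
The plan is to run an area-counting (cut) argument in the spirit of Gupta and Kumar, but with the conserved resource being two-dimensional coverage area rather than one-dimensional bit-hops. Because MPT and MPR let every active node serve an entire disk of radius $T(n)$, the natural currency is the total transmission/reception area that delivering a packet consumes, and the network's budget for this currency per unit time is precisely $\overline{\textrm{TAA}(\Delta,T(n))}$. The bound then follows by dividing the total serviceable area by the area each multicast must spend and by the number of sources.

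First I would lower-bound the area that a single multicast must consume. Any scheme that delivers a packet from a source to all $m$ of its destinations using only MPT and MPR must activate a connected collection of transmitting and relaying nodes whose radius-$T(n)$ disks cover the source and every destination; this collection is exactly a multicast tree in the sense of the definitions, so the area it covers is at least that of the minimum-area tree. Hence each completed multicast costs at least $S(\textrm{MAMT}(T(n)))$ of coverage area, and averaging over the random node placement gives an expected cost of at least $\overline{S(\textrm{MAMT}(T(n)))}$ per multicast.

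Next I would upper-bound the supply and close the argument by conservation. By its definition, $\overline{\textrm{TAA}(\Delta,T(n))}$ is the total coverage area that can be simultaneously activated in the network per unit time once the $\Delta$-guard of the protocol model and the per-region concurrency limit of MPT+MPR are accounted for; this is how much of the area currency the network can spend each second. Since the $n$ sources each deliver $C_m(n)$ bits per second and every delivered bit spends at least $\overline{S(\textrm{MAMT}(T(n)))}$ of area, demand cannot exceed supply:
\begin{equation}
  n\,C_m(n)\,\overline{S(\textrm{MAMT}(T(n)))}\;\le\;\overline{\textrm{TAA}(\Delta,T(n))}.
\end{equation}
Solving for $C_m(n)$ yields the claimed bound
\begin{equation}
  C_m(n)=O\!\left(\frac{1}{n}\cdot\frac{\overline{\textrm{TAA}(\Delta,T(n))}}{\overline{S(\textrm{MAMT}(T(n)))}}\right).
\end{equation}

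The step I expect to be the main obstacle is making the per-multicast area lower bound rigorous: I must argue that no clever overhearing, scheduling, or spatial overlap of disks lets a multicast be completed with coverage area strictly below the $\textrm{MAMT}$ optimum, and that the averages and the conservation inequality survive the randomness of node positions. This is where Chernoff concentration supplies the sharp deviation control and, in the subsequent lemmas, the EMST bound of Lemma~\ref{lem:EMST} is invoked to turn $\overline{S(\textrm{MAMT}(T(n)))}$ and $\overline{\textrm{TAA}(\Delta,T(n))}$ into explicit orders, so that the right-hand side evaluates to $\Theta(nT^{3}(n))$. Matching the per-packet cost to $\overline{\textrm{MAMT}(T(n))}$ and the budget to $\overline{\textrm{TAA}(\Delta,T(n))}$ is exactly what makes the final expression computable.
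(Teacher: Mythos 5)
Your proposal is correct and follows essentially the same area-conservation argument as the paper: the paper's own proof simply asserts that $\overline{\textrm{TAA}(\Delta,T(n))}\big/\overline{S(\textrm{MAMT}(T(n)))}$ counts the simultaneous multicast sessions and normalizes by $n$. Your explicit demand-versus-supply inequality $n\,C_m(n)\,\overline{S(\textrm{MAMT}(T(n)))}\le\overline{\textrm{TAA}(\Delta,T(n))}$ is just a more carefully written form of that same step, and your flagged obstacle (that no scheme can cover less area than the MAMT optimum) is likewise left implicit in the paper.
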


\begin{IEEEproof}
With MPT and MPR, we observe that $\overline{S\left(
\textrm{MAMT}(T(n))\right)}$ represents the total area required to
transmit information from a multicast source to all its $m$
destinations. The ratio between average total active area,
$\overline{\textrm{TAA}(\Delta, T(n))}$, and
$\overline{S\left(\textrm{MAMT}(T(n))\right)}$ represents the
average number of simultaneous multicast communications that can
occur in the network. Normalizing this ratio by $n$ provides
per-node capacity.
\end{IEEEproof}

Lemma \ref{lem:important1} provides the upper bound for the
multicast throughput capacity with MPT and MPR as a function of
$\overline{S\left(\textrm{MAMT}(T(n))\right)}$ and
$\overline{\textrm{TAA}(\Delta, T(n))}$. In order to compute the
upper bound, we derive the upper bound of
$\overline{\textrm{TAA}(\Delta, T(n))}$ and the lower bound of
$\overline{S\left(\textrm{MAMT}(T(n))\right)}$. Combining these
results provides an upper bound for the multicast throughput
capacity with MPT and MPR.

\begin{lemma}\label{lem:prepared}
 The average area of a multicast tree with transmission range $T(n)$,
$\overline{S\left(\textrm{MAMT}(T(n))\right)}$ is lower bounded by
$\Omega \left(T(n)\right)$, when $m$ is a constant value.
\end{lemma}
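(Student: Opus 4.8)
The plan is to show that the minimum-area multicast tree, which must connect a source to $m$ destinations using circles of radius $T(n)$ centered at source and relay nodes, cannot have area smaller than $\Omega(T(n))$. The key geometric insight is that the area $\overline{S(\mathrm{MAMT}(T(n)))}$ is the union of disks of radius $T(n)$ centered along the tree, and this area is fundamentally controlled by the total Euclidean length of the tree. Specifically, a tree connecting the source to its destinations has its relays spaced at distance at most $T(n)$ apart (otherwise consecutive nodes could not communicate), so the swept region behaves like a ``sausage'' of width $\Theta(T(n))$ around the tree's edges.

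First I would establish the relationship between the covered area and the tree length: if a multicast tree has total Euclidean length $\|T\|$, then covering it with disks of radius $T(n)$ produces an area that scales like $\Theta(\|T\|\, T(n))$ when $\|T\| \gg T(n)$, plus a baseline term of $\Theta(T^2(n))$ from a single disk when the tree is short. Thus $\overline{S(\mathrm{MAMT}(T(n)))} = \Omega(\max(T^2(n),\, \overline{\|\mathrm{MEMT}\|}\, T(n)))$. Second, I would invoke Lemma~\ref{lem:EMST} (Steele's bound) to note that for $m$ destinations in the unit square, $\overline{\|\mathrm{EMST}\|} = \Theta(\sqrt{m})$, and since the MEMT is at least as long as the EMST restricted to the relevant points, the multicast tree length is $\Omega(1)$ in the worst case but, crucially, is at least the distance needed to reach even one destination.

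The cleanest route to the stated bound is to observe that any multicast tree must contain at least one edge emanating from the source toward a destination, and the disk of radius $T(n)$ centered at the source alone already contributes area $\Theta(T^2(n))$; but to actually \emph{reach} a destination at distance $\Omega(1)$ (the typical source-destination separation in a unit square with constant $m$), the tree must traverse a path whose sausage-cover has area $\Omega(\|T\|\, T(n)) = \Omega(T(n))$, since $\|T\| = \Omega(1)$. Combining, $\overline{S(\mathrm{MAMT}(T(n)))} = \Omega(T(n))$, which is the claimed lower bound. I would make this rigorous by lower-bounding the covered area by the area of the thin rectangle (or sausage) of width $2T(n)$ along a straight-line segment from source to its nearest destination, whose length is bounded below by a constant with high probability.

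The main obstacle I anticipate is handling the regime-dependence carefully: when $T(n)$ is large, a single disk of area $\Theta(T^2(n))$ may dominate and could exceed the linear term, so I must verify that the $\Omega(T(n))$ bound is the correct (weaker) statement that holds uniformly, and confirm it is not accidentally superseded. I would also need to control the probabilistic claim that the source-to-destination distance is $\Omega(1)$ with high probability using the uniform distribution of nodes and the Chernoff bound (Lemma following Lemma~\ref{lem:EMST}), ensuring the expectation $\overline{S(\cdot)}$ inherits the lower bound rather than being dragged down by atypical configurations where all destinations cluster near the source.
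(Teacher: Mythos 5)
Your proposal is correct and follows essentially the same route as the paper: lower-bound the covered area by (tree length)$\,\times\,T(n)$ and then observe that the relevant tree length is $\Theta(\sqrt{m})=\Theta(1)$ for constant $m$. The only real difference is that the paper simply cites \cite{LiTaFr07} for the area-versus-length ``sausage'' relation $\overline{S(\textrm{MAMT}(T(n)))}=\Omega\left(\overline{\|\textrm{EMST}\|}\times T(n)\right)$, whereas you derive that geometric step directly by lower-bounding the union of disks with a tube of width $\Theta(T(n))$ around a source-to-destination segment of length $\Omega(1)$ w.h.p.
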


\begin{IEEEproof}
From \cite{LiTaFr07}, it can be deduced that
$\overline{S\left(\textrm{MAMT}(T(n))\right)}$ is lower bounded as
$\Omega \left( \overline{\|\textrm{EMST}\|} \times T(n)\right)$.
Even for the case of the minimum value for $T(n)$ to assure
connectivity, this upper bound is guaranteed for constant values of
$m$. Lemma \ref{lem:EMST} states that $\overline{\|\textrm{EMST}\|}
= \Theta(\sqrt{m}) = \Theta(1)$. The proof follows immediately.
\end{IEEEproof}

\begin{lemma}\label{lem:jj-added}
The average total active area, $\overline{\textrm{TAA}(\Delta,
T(n))}$, has the following upper bound in networks with MPT and MPR.
\begin{equation}\label{eq:MMMIS_upperbound}
\overline{\textrm{TAA}(\Delta, T(n))} = O\left( n^2T^4(n)\right)
\end{equation}
\end{lemma}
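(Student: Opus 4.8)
The plan is to unwind the definition of $\textrm{TAA}(\Delta,T(n))$ and reduce the statement to a purely combinatorial count of the transmitter--receiver links that can be simultaneously active inside one communication region. By definition $\overline{\textrm{TAA}(\Delta,T(n))}$ is the total network area (which equals $1$ here, since the $n$ nodes lie in a unit square) multiplied by the statistically averaged maximum number of simultaneous transmissions and receptions that a single region of area $\Theta(T^2(n))$ can support. It therefore suffices to bound this per-region count by $O(n^2T^4(n))$ and to argue that its average is of the same order as its typical value.

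First I would count the nodes inside one communication region. Since the placement is uniform with density $n$ over the unit square, a region $S$ with $|S|=\Theta(T^2(n))$ satisfies $E(N_S)=n|S|=\Theta(nT^2(n))$. Fixing some $0<\delta<1$ and applying the Chernoff bound~\eqref{eqn:chernoff} shows that, w.h.p., $N_S$ stays within a constant factor of its mean, so $N_S=\Theta(nT^2(n))$ w.h.p.; this is what lets me pass from the averaged quantity in the definition to a sharp bound.

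Next I would bound the maximum simultaneous activity in the region. Because the nodes are half-duplex, at any instant the $N_S=\Theta(nT^2(n))$ nodes split into a transmitter set and a disjoint receiver set. Under MPT a transmitter may serve several in-range receivers and under MPR a receiver may collect several in-range transmissions, so a simultaneous communication pattern is precisely a set of active transmitter--receiver links. The number of such links is at most the number of transmitter/receiver pairs, namely $(\#\text{transmitters})\,(\#\text{receivers})\le N_S^2=O(n^2T^4(n))$. Multiplying this per-region count by the unit network area yields $\overline{\textrm{TAA}(\Delta,T(n))}=O(n^2T^4(n))$.

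The main obstacle is the combinatorial step: I must justify that the correct order for the maximal simultaneous activity under the \emph{combined} MPT$+$MPR model is the \emph{product} of the transmitter and receiver counts, giving the quadratic $N_S^2$, rather than the linear $N_S$ that bounds MPT-only or MPR-only networks. I would also verify that links whose endpoints straddle the boundary of the fixed region, together with the $\Delta$-guard constraint of the protocol model, alter the count by at most constant factors and hence leave the $O(n^2T^4(n))$ order unchanged; the reduction to a single region and the Chernoff concentration are then routine.
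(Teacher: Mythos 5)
Your proposal is correct and follows essentially the same route as the paper: bound the number of nodes in a communication region of area $\Theta(T^2(n))$ by $O(nT^2(n))$, observe that under combined MPT and MPR the simultaneous links are at most the product of the transmitter and receiver counts, and multiply by the unit network area to get $O(n^2T^4(n))$. The only difference is cosmetic: the paper additionally exhibits an inner circle of radius $T(n)/2$ in which every transmitter reaches every receiver, which shows the product bound is attained but is not needed for the stated $O(\cdot)$ upper bound.
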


\begin{IEEEproof}
As discussed earlier, the $\overline{\textrm{TAA}(\Delta, T(n))}$
for point-to-point communication is equal to 1 since for each circle
of radius $T(n)$, there is only a single pair of
transmitter-receiver nodes (see Fig. \ref{fig:area12}). For the case
of MPR and MPT, the number of nodes in a circle of radius $T(n)$ is
upper bounded as $O(n T^2(n))$. This is also upper bound for the
number of transmitters or receivers in this region. The upper bound
for  $\overline{\textrm{TAA}(\Delta, T(n))}$ is achieved when the
maximum number of transmitter and receivers are employed in this
circle. Figure \ref{fig:area12} demonstrates an example that can
achieve this upper bound simultaneously for transmitters and
receivers. Let a circle of radius $\frac{T(n)}{2}$ located at the
center of another circle of radius $T(n)$. Note that with this
construction,  any two nodes inside the small circle are connected.
If we randomly assign half of the nodes inside the smaller circle as
transmitters and the other half as receiver nodes, then the average
number of transmitters and receivers in this circle are proportional
to $\Theta(n T^2(n))$. Given the fact that this value also is the
maximum possible number of transmitter and receiver nodes, the
result follows immediately.
\end{IEEEproof}

\begin{figure}[http]
    \center
      \includegraphics[width=2.3in]{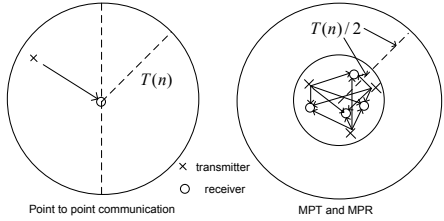}\\
      \caption{Upper Bound of Total Available Area Based On Protocol Model}
    \label{fig:area12}
\end{figure}

Combining Lemmas \ref{lem:important1}, \ref{lem:prepared} and
\ref{lem:jj-added}, we can compute the upper bound for multicast
capacity of MPT and MPR.

\begin{theorem}\label{the:upper_bound}
In wireless ad hoc networks with MPT and MPR, the upper bound on the
per-node throughput capacity of multicast with constant number of
destinations is
\begin{equation} \label{eq:upper_bound}
C_{m}(n) = O\left( nT^3(n)\right)
\end{equation}
\end{theorem}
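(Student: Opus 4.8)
The plan is to obtain Theorem~\ref{the:upper_bound} as a direct arithmetic combination of the three preceding lemmas, since all of the technical content has already been established there. Lemma~\ref{lem:important1} expresses the per-node multicast capacity as $O\!\left(\frac{1}{n}\cdot\frac{\overline{\textrm{TAA}(\Delta,T(n))}}{\overline{S(\textrm{MAMT}(T(n)))}}\right)$, the interpretation being that the ratio $\overline{\textrm{TAA}}/\overline{S(\textrm{MAMT})}$ counts the number of multicast sessions that can be active simultaneously, and dividing by $n$ converts this into a per-node rate. To turn this into an explicit function of $n$ and $T(n)$, I would substitute the quantitative bounds supplied by the other two lemmas.

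The key point is the directionality of the bounds. Because the goal is an \emph{upper} bound on capacity, and capacity scales like the numerator over the denominator, I would use the \emph{upper} bound on the numerator together with the \emph{lower} bound on the denominator. Lemma~\ref{lem:jj-added} gives $\overline{\textrm{TAA}(\Delta,T(n))}=O\!\left(n^{2}T^{4}(n)\right)$, and Lemma~\ref{lem:prepared} gives $\overline{S(\textrm{MAMT}(T(n)))}=\Omega\!\left(T(n)\right)$ for constant $m$. Substituting both into the expression from Lemma~\ref{lem:important1} yields
\begin{equation}
C_{m}(n)=O\!\left(\frac{1}{n}\cdot\frac{n^{2}T^{4}(n)}{T(n)}\right)=O\!\left(nT^{3}(n)\right),
\end{equation}
which is exactly the claimed bound.

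There is no serious obstacle at this step; the difficulty lives entirely in the lemmas, particularly the area bound of Lemma~\ref{lem:jj-added}. The one thing I would take care to justify is that using the lower bound on $\overline{S(\textrm{MAMT}(T(n)))}$ is the correct choice for an upper bound on capacity: a smaller per-multicast area footprint can only increase the number of concurrent sessions, so the $\Omega(T(n))$ area estimate produces the most optimistic (largest) admissible capacity, which is precisely what an upper bound must capture. I would also confirm that the constant-$m$ hypothesis is applied consistently, since it is what reduces $\overline{\|\textrm{EMST}\|}=\Theta(\sqrt{m})$ to $\Theta(1)$ in Lemma~\ref{lem:prepared} and thereby removes any $m$-dependence from the final expression.
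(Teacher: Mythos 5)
Your proposal matches the paper's own derivation exactly: the paper proves Theorem~\ref{the:upper_bound} simply by "combining Lemmas \ref{lem:important1}, \ref{lem:prepared} and \ref{lem:jj-added}," i.e., substituting $\overline{\textrm{TAA}}=O(n^2T^4(n))$ and $\overline{S(\textrm{MAMT})}=\Omega(T(n))$ into the expression of Lemma~\ref{lem:important1} to get $O(nT^3(n))$. Your additional remark on why the lower bound on the denominator is the right choice for an upper bound on capacity is a correct and welcome clarification, but the argument is the same.
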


\subsection{Lower Bound} \label{subsec:lower}

To derive an achievable lower bound, we use a TDMA scheme for random
dense wireless ad hoc networks similar to the approach used in
\cite{GiKu05,KuVi04}.

We first divide the network area into square cells.  Each square
cell has an area of $T^2(n)/2$, which makes the diagonal length of
square equal to $T(n)$, as shown in Fig.~\ref{fig:construction}.
Under this condition, connectivity inside all cells is guaranteed
and all nodes inside a cell are within communication range of each
other. We build a cell graph over the cells that are occupied with
at least one vertex (node). Two cells are connected if there exist a
pair of nodes, one in each cell, that are less than or equal to
$T(n)$ distance apart. Because the whole network is connected when
$T(n) = r(n) \geq \Theta\left(\sqrt{ \log n /n} \right)$, it follows
that the cell graph is connected \cite{GiKu05, KuVi04}.
\begin{figure}[http]
    \center
      \includegraphics[width=2.3in]{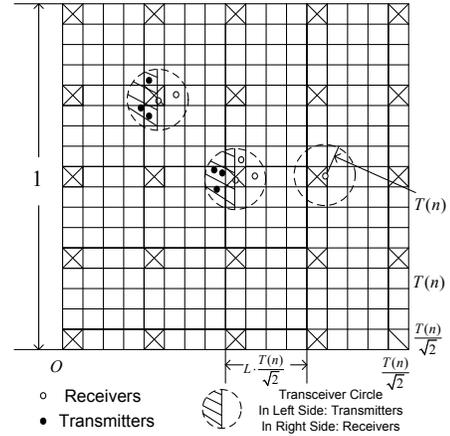}\\
      \caption{Cell construction used to derive a lower bound on capacity}
    \label{fig:construction}
    \end{figure}

To satisfy the MPT and MPR protocol model, we organize cells in
groups so that simultaneous transmissions within each group does not
violate the conditions for successful communication in the MPT and
MPR protocol model.  Let $L$ represent the minimum number of cell
separations in each group of cells that communicate simultaneously.
Utilizing the protocol model, $L$ satisfies the following condition:
\begin{equation}\label{eq:far_away}
    L= \left\lceil 1+\frac{T(n)+(1+\Delta) T(n)}{T(n)/\sqrt{2}}\right\rceil=
\lceil 1+\sqrt{2}(2+\Delta) \rceil.
\end{equation}

If we divide time into $L^2$ time slots and assign each time slot to
a single group of cells, interference is avoided and the protocol
model is satisfied.  The separation example can be shown for the
upper two receiver circles in Fig. \ref{fig:construction}.  For the
MPT and MPR protocol model, the distance between two adjacent
receiving nodes is $(2+\Delta)T(n)$. Because this distance is
smaller than $(L-1)T(n)$, this organization of cells  guarantees
that the MPT and MPR protocol model is satisfied.  Fig.
\ref{fig:construction} represents one of these groups with a cross
sign inside those cells for $L=4$. We can derive an achievable multicast capacity for MPT and MPR by
taking advantage of this cell arrangement and TDMA scheme.
The capacity reduction caused by the TDMA scheme is a constant
factor and does not change the order capacity of the network.

Next our objective is to find an achievable lower bound using the
Chernoff bound, such that the distribution of the number of edges in
this unit space is sharply concentrated around its mean, and hence
the actual number of simultaneous transmissions occurring in the
unit space in a randomly chosen network is indeed
$\Theta(n^2T^2(n))$ w.h.p..

\begin{lemma}\label{lem:lower bound}
{The circular area of radius $T(n)$ corresponding to the transceiver
range of any node $j$ in the cross area in Fig.
\ref{fig:construction} contains $\Theta(n T^2(n))$ nodes w.h.p., and
is uniformly distributed for all values of $j$, $1 \leq j \leq
\frac{1}{(LT(n)/\sqrt{2})^2}$}.
\end{lemma}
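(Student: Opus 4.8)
The plan is to apply the Chernoff bound \eqref{eqn:chernoff} to each disk individually and then promote the per-disk concentration to a statement holding uniformly over all $j$ by a union bound. First I would fix a node $j$ in a cross cell and let $N_j$ denote the number of nodes falling in the disk $S_j$ of radius $T(n)$ centered at $j$. Since the $n$ nodes are placed uniformly in the unit square with density $n$, the expected count is $E(N_j) = n|S_j| = n\pi T^2(n) = \Theta(nT^2(n))$; for disks meeting the boundary the covered area is still $\Omega(T^2(n))$, so the order of the mean is unaffected. Applying \eqref{eqn:chernoff} with a fixed $0 < \delta < 1$ gives
\[
P\big[\,|N_j - n|S_j|| > \delta n|S_j|\,\big] < e^{-\theta n\pi T^2(n)},
\]
so each single disk holds $\Theta(nT^2(n))$ nodes except with probability $e^{-\theta n\pi T^2(n)}$.

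Second I would invoke the connectivity regime $T(n) \ge \Theta(\sqrt{\log n / n})$, under which $nT^2(n) = \Omega(\log n)$; hence the per-disk failure probability is bounded by $n^{-\kappa}$, where $\kappa$ can be made as large as desired by fixing the constant hidden in $T(n)$ (equivalently in $\theta$). The number of cross cells, and therefore of distinct centers $j$ to consider, is $\frac{1}{(LT(n)/\sqrt{2})^2} = \Theta(1/T^2(n)) = O(n/\log n)$, since $L$ is the constant in \eqref{eq:far_away}. A union bound then yields
\[
P\big[\,\exists j:\ |N_j - n|S_j|| > \delta n|S_j|\,\big] \le \frac{1}{(LT(n)/\sqrt{2})^2}\, e^{-\theta n\pi T^2(n)} = O\!\left(\frac{n}{\log n}\right) n^{-\kappa},
\]
which tends to $0$ as $n \to \infty$ once $\kappa > 1$. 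Thus, w.h.p., every disk $S_j$ simultaneously contains $\Theta(nT^2(n))$ nodes, which is exactly the ``uniformly distributed for all $j$'' assertion.

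The main obstacle is controlling the union bound rather than the single-disk estimate: I must guarantee that the exponential decay of the Chernoff tail beats the polynomially many ($\Theta(1/T^2(n))$) centers, and this hinges entirely on the connectivity lower bound forcing $nT^2(n) = \Omega(\log n)$ with a large enough leading constant, so that $\kappa$ exceeds the exponent of the cell count. Two secondary points need care but do not affect the order: the disks $S_j$ overlap and are centered at random nodes, so the variables $N_j$ are not independent --- this is harmless because the union bound requires only the individual tail bounds of \eqref{eqn:chernoff}, not independence --- and boundary disks cover only a constant fraction of $\pi T^2(n)$, which still leaves the count at $\Theta(nT^2(n))$.
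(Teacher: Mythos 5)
Your proposal is correct and follows essentially the same route as the paper: a per-disk Chernoff estimate with mean $E(N_j)=\pi nT^2(n)$, a union bound over the $\Theta(1/T^2(n))$ centers, and the connectivity condition $nT^2(n)=\Omega(\log n)$ to make the exponential tail dominate the polynomial number of disks. If anything, you are slightly more careful than the paper, which simply asserts $\lim_{n\to\infty} e^{-\theta\pi nT^2(n)}/T^2(n)=0$ without flagging that this requires the constant in $\theta\, nT^2(n)$ to be large enough relative to the cell count, a point you address explicitly.
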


\begin{IEEEproof}
The statement of this lemma can be expressed as {\small
\begin{equation}\label{31}
    \lim_{n\rightarrow \infty}P\left[\bigcap_{j=1}^{\frac{1}{(LT(n)/\sqrt{2})^2}}|N_j-E(N_j)|<\delta E(N_j)\right]=1,
\end{equation}
}

\noindent where $N_j$ and $E\left(N_j\right)$ are the  random
variables that represent the number of transmitters in the receiver
circle of radius $T(n)$ centered by the receiver $j$ and the
expected value of this random variable respectively, and $\delta$ is
a positive arbitrarily small value close to zero.

From the Chernoff bound in Eq. (\ref{eqn:chernoff}), for any given
$0<\delta<1$, we can find $\theta>0$ such that
$ P\left[|N_j-E(N_j)|>\delta E(N_j)\right] < e^{-\theta E(N_j)}$.
Thus, we can conclude that the probability that the value of the
random variable $N_j$ deviates by an arbitrarily small constant
value from the mean tends to zero as $n\rightarrow \infty$. This is
a key step in showing that when all the events
$\bigcap_{j=1}^{\frac{1}{\left(LT(n)/\sqrt{2}\right)^2}}|N_j-E(N_j)|<\delta
E(N_j)$ occur simultaneously, then all $N_j$'s converge uniformly to
their expected values. Utilizing the union bound, we arrive at
\begin{eqnarray}\label{33}
  &&P\left[\bigcap_{j=1}^{\frac{1}{(L T(n)/\sqrt{2})^2}}|N_j-E(N_j)|<\delta E(N_j)\right]\nonumber \\
   &\geq& 1-\sum_{j=1}^{\frac{1}{(L T(n)/\sqrt{2})^2}}P\left[|N_j-E(N_j)|>\delta
   E(N_j)\right]\nonumber
   \\
&>&1-\frac{1}{(L T(n)/\sqrt{2})^2}e^{-\theta E(N_j)}.
\end{eqnarray}

Given that $E(N_j)=\pi n T^2(n)$, then we have
\begin{eqnarray}\label{34}
    &&\lim_{n\rightarrow \infty}P\left[\bigcap_{j=1}^{\frac{1}{(L T(n)/\sqrt{2})^2}}|N_j-E(N_j)|<\delta
    E(N_j)\right]\nonumber\\
    &&\geq 1-\lim_{n\rightarrow \infty} \frac{1}{(L T(n)/\sqrt{2})^2}e^{-\theta\pi n T^2(n)}
\end{eqnarray}

Utilizing the connectivity criterion,
$\lim_{n\rightarrow \infty}\frac{e^{-\theta\pi
nT^2(n)}}{T^2(n)}\rightarrow 0$, which finishes the proof.
\end{IEEEproof}

Furthermore, we can arrange all of the nodes in the left side of the
corresponding transceiver circle be the transmitters, and all of the
nodes in the right side of the corresponding transceiver circle be
the receivers. Thus, we arrive at the following lemma.

    \begin{lemma}\label{lem:total}
    In the unit square area for a wireless ad hoc network shown in Fig.
    \ref{fig:construction}, the total number of transmitter-receiver
    links (simultaneous transmissions) is
    $\Omega\left(n^2T^2(n)\right)$.
    \end{lemma}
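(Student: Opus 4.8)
The plan is to lower-bound the number of concurrent transmitter--receiver links supported by the cell construction and $L^2$-slot TDMA schedule already established, and to show this number is $\Omega(n^2T^2(n))$ w.h.p. Everything reduces to two facts: a constant fraction of cells are active in each slot, and each active cell is densely populated by Lemma~\ref{lem:lower bound}. Since the claim is a lower bound, it suffices to exhibit one schedule achieving $\Omega(n^2T^2(n))$ simultaneous links.

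First I would count the active cells. Each cell has area $T^2(n)/2$, so the unit square is tiled by $\Theta(1/T^2(n))$ cells; the TDMA schedule activates the fraction $1/L^2$ of them in a given slot, and since $L$ is the constant of Eq.~(\ref{eq:far_away}), $\Theta(1/T^2(n))$ cells are active simultaneously. Next I would count links inside one active cell. The cell diagonal equals $T(n)$, so every pair of nodes in the cell is within communication range. By Lemma~\ref{lem:lower bound} (applied to the cell region, whose area is of the same order as the transceiver disk), the cell holds $\Theta(nT^2(n))$ nodes w.h.p., uniformly over all active cells. I would assign the left half of these nodes as transmitters and the right half as receivers, each half of size $\Theta(nT^2(n))$ and every transmitter--receiver pair at distance at most $T(n)$. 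Invoking MPT and MPR, every receiver in the cell decodes every transmitter concurrently, so one active cell carries $\Theta(nT^2(n))\cdot\Theta(nT^2(n)) = \Theta(n^2T^4(n))$ links.

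Multiplying the per-cell count by the number of active cells then gives $\Theta(1/T^2(n))\cdot\Theta(n^2T^4(n)) = \Theta(n^2T^2(n))$ links in a single slot, and the uniform w.h.p.\ guarantee of Lemma~\ref{lem:lower bound} (its union bound over all $j$) ensures the lower bound holds across all active cells at once, establishing $\Omega(n^2T^2(n))$.

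The main obstacle is not the counting but verifying that these links are genuinely feasible simultaneously under the protocol model, rather than merely counting pairs that are individually in range. Two checks are required. Intra-cell, MPR is precisely the assumption that lets a single receiver resolve the $\Theta(nT^2(n))$ concurrent transmissions reaching it, and the half-duplex constraint is respected because each node is designated purely as a transmitter or purely as a receiver. Inter-cell, the $L^2$ grouping keeps any transmitter of another active cell outside the $(1+\Delta)T(n)$ interference radius of a receiver, as guaranteed by the choice of $L$ in Eq.~(\ref{eq:far_away}); hence no successful reception is destroyed. Once both conditions are confirmed, the counting above closes the proof.
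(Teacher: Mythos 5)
Your proof is correct and follows essentially the same route as the paper's: count $\Theta(1/T^2(n))$ simultaneously active regions, use Lemma~\ref{lem:lower bound} to place $\Theta(nT^2(n))$ nodes in each, split them into transmitters and receivers to get $\Theta(n^2T^4(n))$ links per region, and multiply. The only cosmetic difference is that you work with the cell itself (diagonal $T(n)$) where the paper shrinks to a circle of radius $T(n)/2$ to guarantee all pairs are in range; your explicit verification of the protocol-model and half-duplex constraints is a welcome addition the paper leaves implicit.
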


    \begin{IEEEproof}
    From Lemma \ref{lem:lower bound}, for any node in the cross cell
    in the whole network shown in Fig. \ref{fig:construction}, there are
    $\Theta(nT^2(n))$ nodes in the transceiver circle. We divided the
    total nodes into two categories, transmitters in the left of the
    transceiver circles and receivers in the right of the transceiver
    circles. To guarantee all of the transmitters and receivers are in
    the transceiver range, we only consider the nodes in the circle with
    radius $T(n)/2$. Because of the MPT and MPR capabilities, so that
    every transmitter in the left of the transceiver circle with
    $T(n)/2$ radius can transmit successfully to every receiver in the
    right, then the total number of successful transmissions is
    $\pi^2n^2T^4(n)/16$ which is the achievable lower bound. The actual
    number of the transmissions can be much larger than this because we
    only consider $T(n)/2$ instead of $T(n)$. Using the Chernoff Bound
    in Eq. \ref{eqn:chernoff} and Lemma \ref{lem:lower bound}, we can
    get w.h.p. that the total number of successful transmissions is
    \begin{equation}
        \Omega\left(\frac{1}{\left(LT(n)/\sqrt{2}\right)^2}\times
        \frac{\pi^2n^2T^4(n)}{16}\right)=
        \Omega\left(n^2T^2(n)\right)
    \end{equation}
    \end{IEEEproof}

    The above results enables us to obtain the following achievable lower bound.

    Let us define $\overline{\#\textrm{MEMTC}(T(n))}$ as the total
    number of cells that contain all the nodes in a multicast group. The
    following lemma establishes the achievable lower bound for the
    multicast throughput capacity of MPT and MPR as a function of
    $\overline{\#\textrm{MEMTC}(T(n))}$.

    \begin{lemma}\label{lem:memtc}
    The achievable lower bound of the multicast capacity is given by
    \begin{equation}\label{eq:lower_bound_important2}
        C_{m}(n)=\Omega \left(\frac{nT^2(n)}{\overline{\#\textrm{MEMTC}(T(n))}}
        \right).
    \end{equation}
    \end{lemma}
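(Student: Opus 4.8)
The plan is to derive the achievable rate by matching the transmission \emph{supply} furnished by the cell-based TDMA schedule against the routing \emph{demand} created by multicast forwarding, both measured in transmitter--receiver links per frame. The supply is already in hand: Lemma~\ref{lem:total} guarantees that, w.h.p., the network sustains $\Omega(n^2T^2(n))$ simultaneous links in each TDMA frame, and Lemma~\ref{lem:lower bound} certifies that this budget is spread uniformly, every transceiver circle carrying its $\Theta(nT^2(n))$ share. For the demand, I route the packets of each session along its $\textrm{MAMT}(T(n))$, which by definition occupies $\overline{\#\textrm{MEMTC}(T(n))}$ cells, and count how many links one end-to-end multicast delivery consumes.

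First I would show that delivering a single packet from a source to all $m$ sinks costs $\Theta(\overline{\#\textrm{MEMTC}(T(n))})$ links. The tree touches $\overline{\#\textrm{MEMTC}(T(n))}$ cells, so it has that many cell-to-cell hops up to a constant, and each hop is realized by one transmitter in a cell reaching one relay/sink in the adjacent cell, i.e., one link; since MPT lets a relay with several tree children serve them in a single transmission and $m$ is a constant, the number of destinations contributes only a constant factor. Next I would aggregate: if each of the $n$ sources injects at rate $C_m(n)$, the per-frame link demand is $\Theta\!\left(n\,C_m(n)\,\overline{\#\textrm{MEMTC}(T(n))}\right)$. Requiring this to stay within the supply $\Omega(n^2T^2(n))$ and solving gives $C_m(n)=\Omega\!\left(nT^2(n)/\overline{\#\textrm{MEMTC}(T(n))}\right)$, with the constant $L^2$ frame-reuse loss absorbed into $\Omega(\cdot)$ because $L$ is a constant.

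The hard part will be upgrading the scalar inequality ``demand $\le$ supply'' into a \emph{feasible} schedule in which no individual cell is asked to relay more than its $\Theta(n^2T^4(n))$-link share. This is where I would lean on the uniform placement of nodes: the same Chernoff-plus-union-bound machinery used in Lemma~\ref{lem:lower bound} shows not only that the supply concentrates at $\Theta(nT^2(n))$ per circle for all circles simultaneously, but also that the randomly located sources and sinks scatter their trees so that the expected number of trees crossing any fixed cell stays within a constant factor of the average. Hence the per-cell load concentrates around its mean w.h.p., the balance holds across all cells at once, and the claimed lower bound is achievable. I expect this congestion/hot-spot control to be the only step requiring genuine probabilistic care; the supply and demand counts themselves are elementary once the cell tessellation is fixed.
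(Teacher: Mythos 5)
Your proposal is correct and follows essentially the same route as the paper: the paper's proof likewise divides the aggregate link supply $\Theta(n^2T^2(n))$ from Lemma~\ref{lem:total} (distributed over the $(T(n)/\sqrt{2})^{-2}$ cells at $\Omega(n^2T^4(n))$ links per cell) by the per-session demand of $\overline{\#\textrm{MEMTC}(T(n))}$ cells and then normalizes by $n$. Your added attention to per-cell congestion is a refinement the paper glosses over, but it does not change the argument's structure.
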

    \begin{IEEEproof}
    There are $(T(n)/\sqrt{2})^{-2}$ cells in the unit square network
    area. From the definition of $\overline{\#\textrm{MEMTC}(T(n))}$ and
    the fact that our TDMA scheme does not change the order capacity, it is clear that there are at most in the
    order of $\overline{\#\textrm{MEMTC}(T(n))}$ interfering cells for
    multicast communication. Hence, from Lemma \ref{lem:total}, there
    are a total of $\Theta\left(n^2T^2(n)\right)$ nodes transmitting
    simultaneously, which are distributed over all the
    $(T(n)/\sqrt{2})^{-2}$ cells. For each cell, the order of nodes in
    each cell is $\Omega\left(n^2T^4(n)\right)$. Accordingly, the total
    lower bound capacity is given by $\Omega\left(
    \left(T(n)/\sqrt{2}\right)^{-2}\times\left(n^2
    T^4(n)\right)\times\left(\overline{\#\textrm{MEMTC}(T(n))}\right)^{-1}\right)$.
    Normalizing this value by total number of nodes in the network, $n$,
    proves the lemma.
    \end{IEEEproof}

    Given the above lemma, to express the lower bound of $ C_{m}(n)$ as
    a function of network parameters, we need to compute the upper bound
    of $\overline{\#\textrm{MEMTC}(T(n))}$, which we do next.

    \begin{lemma}\label{lem:rela}
    The average number of cells covered by the nodes in
    $\textrm{MEMTC}(T(n))$, is upper bounded w.h.p. as follows:
    \begin{equation}\label{eq:lower_bound_rela}
    \overline{\#\textrm{MEMTC}(T(n))}\le
    \Theta\left(\frac{\sqrt{m}}{T(n)}\right)
    \end{equation}
    \end{lemma}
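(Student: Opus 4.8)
The plan is to bound the discrete quantity $\overline{\#\textrm{MEMTC}(T(n))}$ by the continuous Euclidean length of the multicast tree that spans the group, and then invoke Lemma \ref{lem:EMST}. The multicast group consists of one source and $m$ destinations, and the cells covering these nodes (together with the relays lying along the routing tree) are precisely the cells traversed by the $\textrm{MEMT}(T(n))$. So the first step is to observe that $\#\textrm{MEMTC}(T(n))$ is at most the number of square cells (of side $T(n)/\sqrt{2}$, hence diagonal $T(n)$) that the tree $\textrm{MEMT}(T(n))$ intersects.

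The core of the argument is a geometric counting bound. I would argue that a single straight edge of Euclidean length $\ell_e$ embedded in the grid of cells of side $s=T(n)/\sqrt{2}=\Theta(T(n))$ can intersect at most $O\!\left(\ell_e/T(n)+1\right)$ cells: crossing from one cell to an adjacent one requires the segment to advance a distance of order $s$ in at least one coordinate, so the number of cell boundaries crossed is $O(\ell_e/s)$, and the constant accounts for the two endpoint cells. Summing this bound over all edges of the tree, and using the fact that a tree spanning a source and $m$ destinations has exactly $m$ edges, gives
\begin{equation}\label{eq:memtc-count}
\#\textrm{MEMTC}(T(n)) \;\le\; O\!\left(\frac{\|\textrm{MEMT}(T(n))\|}{T(n)}\right)+O(m).
\end{equation}
Taking expectations in \eqref{eq:memtc-count} and applying Lemma \ref{lem:EMST} (with $d=2$, so $\overline{\|\textrm{MEMT}\|}=\Theta(\sqrt{m})$, since the MEMT has the same order of total length as the EMST of the group points) yields $\overline{\#\textrm{MEMTC}(T(n))} \le O\!\left(\sqrt{m}/T(n)\right)+O(m)$.

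It then remains to argue that the first term dominates. Since $m$ is a constant and the connectivity regime forces $T(n)\to 0$ as $n\to\infty$, we have $T(n)\le 1/\sqrt{m}$ for all sufficiently large $n$, hence $m\le \sqrt{m}/T(n)$ and the additive $O(m)$ term is absorbed, giving the claimed $\Theta\!\left(\sqrt{m}/T(n)\right)$ upper bound. The w.h.p. qualifier follows from the sharp concentration of the spanning-tree length around its mean (the same concentration underlying Lemma \ref{lem:EMST}), so the realized tree length is $\Theta(\sqrt{m})$ with probability approaching one.

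I expect the main obstacle to be the clean justification of the per-edge cell-crossing bound $O(\ell_e/T(n)+1)$ and, relatedly, making sure the additive contribution of the $+1$ terms (i.e.\ the $O(m)$ term) is genuinely negligible rather than of the same order; the latter is exactly where the assumption that $m$ is constant and $T(n)\to 0$ is essential. A secondary subtlety is confirming that covering the tree's edges with cells indeed accounts for all nodes of the multicast group and the intermediate relays, so that \eqref{eq:memtc-count} is a legitimate upper bound on $\#\textrm{MEMTC}(T(n))$ and not merely on the number of cells meeting the group's terminal nodes.
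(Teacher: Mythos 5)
Your proposal is correct and follows essentially the same route as the paper: both bound the number of cells by the multicast tree's total Euclidean length $\overline{\|\textrm{EMST}\|}=\Theta(\sqrt{m})$ divided by the cell diameter $\Theta(T(n))$. Your version is in fact more explicit than the paper's very terse argument, since you spell out the per-edge cell-crossing count and check that the additive $O(m)$ term is absorbed when $m$ is constant and $T(n)\to 0$.
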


    \begin{IEEEproof}
    Because $T(n)$ is the transceiver range of the network, the maximum
    number of cells for this multicast tree must be at most
    $\Theta\left(\sqrt{m}T^{-1}(n)\right)$, i.e.,
    $\overline{\#\textrm{MEMTC}\left(T(n)\right)}\le
    \Theta\left(\sqrt{m}T^{-1}(n)\right)$.  This upper bound can be achieved
    only if every two adjacent nodes in the multicast tree belong to two
    different cells in the network. However, in practice, it is possible
    that some adjacent nodes in multicast tree locate in a single cell.
    Consequently, this value is upper bound as described in
    (\ref{eq:lower_bound_rela}).
    \end{IEEEproof}

    Combining Lemmas \ref{lem:memtc} and \ref{lem:rela}, we arrive at
    the achievable lower bound of the multicast throughput capacity in
    dense random wireless ad hoc networks with MPT and MPR.
    \begin{theorem}\label{the:lower_bound}
    When the number of the destinations $m$ is a constant, the
    achievable lower bound of the $m$ multicast throughput capacity with
    MPT and MPR is
    \begin{equation}\label{eq:nmk_lower_lower} C_{m}(n) =
    \Omega\left(\frac{nT^3(n)}{\sqrt{m}}\right)
    \end{equation}
    \end{theorem}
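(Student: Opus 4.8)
The plan is to derive the theorem by a direct composition of the two immediately preceding lemmas, since essentially all of the analytic effort has already been expended in establishing them. Lemma~\ref{lem:memtc} gives the achievable per-node capacity as
\[
C_m(n) = \Omega\!\left(\frac{nT^2(n)}{\overline{\#\textrm{MEMTC}(T(n))}}\right),
\]
and Lemma~\ref{lem:rela} controls the only remaining unknown quantity in this expression, namely $\overline{\#\textrm{MEMTC}(T(n))}\le \Theta(\sqrt{m}/T(n))$ w.h.p. The whole argument therefore reduces to substituting the bound of Lemma~\ref{lem:rela} into the denominator of Lemma~\ref{lem:memtc}.

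The one point I would take care to state correctly is the direction of the inequalities. Because $\overline{\#\textrm{MEMTC}(T(n))}$ appears in the \emph{denominator} of the capacity expression, replacing it with its upper bound $\Theta(\sqrt{m}/T(n))$ can only decrease the value of the fraction; hence the resulting quantity remains a valid (conservative) lower bound on $C_m(n)$. Carrying out the substitution gives
\[
C_m(n) = \Omega\!\left(\frac{nT^2(n)}{\sqrt{m}/T(n)}\right) = \Omega\!\left(\frac{nT^3(n)}{\sqrt{m}}\right),
\]
which is exactly the claimed bound.

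Finally, I would verify that the probabilistic qualifiers line up. Lemma~\ref{lem:memtc} inherits its w.h.p. guarantee from the Chernoff-based concentration in Lemmas~\ref{lem:lower bound} and~\ref{lem:total} (where the count of simultaneous transmissions is pinned at $\Theta(n^2T^2(n))$), and Lemma~\ref{lem:rela} is itself stated w.h.p.; their conjunction thus also holds w.h.p. as $n\to\infty$, matching the definition of order throughput capacity. The constant-factor loss from the $L^2$-slot TDMA schedule is already absorbed into the $\Omega(\cdot)$ in Lemma~\ref{lem:memtc} and so does not affect the order. I do not expect any genuine obstacle at this final step: the real difficulty was front-loaded into the cell construction, the transmission-count estimate, and the EMST-based bound on the number of multicast cells. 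For a constant number of destinations, setting $\sqrt{m}=\Theta(1)$ recovers the headline $\Theta(nT^3(n))$ scaling and, together with the matching upper bound of Theorem~\ref{the:upper_bound}, yields a tight order characterization.
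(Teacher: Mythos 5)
Your proposal is correct and matches the paper's own argument exactly: the paper derives Theorem~\ref{the:lower_bound} simply by combining Lemma~\ref{lem:memtc} with the upper bound on $\overline{\#\textrm{MEMTC}(T(n))}$ from Lemma~\ref{lem:rela}, just as you do. Your additional remarks on the direction of the inequality and the w.h.p.\ qualifiers are sound and, if anything, more explicit than the paper's one-line justification.
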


    \subsection{Tight Bound and Comparison with Point-to-Point Communication}

    From Theorems \ref{the:upper_bound} and \ref{the:lower_bound},  we
    can provide a tight bound throughput capacity for  multicasting
    when nodes have MPT and MPR capabilities in dense random wireless
    ad hoc networks as follows.

    \begin{theorem}\label{the:tight_bound}
    The throughput capacity of multicast in random dense wireless ad
    hoc network with MPT and MPR is
    \begin{equation}\label{eq:tight_bound}
    C_{m}^{\textrm{MPT+MPR}}(n)= \Theta\left(\frac{nT^3(n)}{\sqrt{m}}\right)
    \end{equation}
    The transceiver range of MPT and MPR should satisfy
    $T(n)\ge\Theta\left(\sqrt{\log n/n}\right)$.
    \end{theorem}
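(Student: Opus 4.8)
The plan is to derive the tight bound directly by sandwiching the per-session capacity between the upper bound of Theorem~\ref{the:upper_bound} and the achievable lower bound of Theorem~\ref{the:lower_bound}. First I would invoke Theorem~\ref{the:upper_bound}, which gives $C_m(n) = O(nT^3(n))$, and Theorem~\ref{the:lower_bound}, which gives $C_m(n) = \Omega(nT^3(n)/\sqrt{m})$. The key observation is that the hypothesis fixes $m$ as a constant independent of $n$, so $\sqrt{m} = \Theta(1)$; consequently $nT^3(n)/\sqrt{m} = \Theta(nT^3(n))$ and the upper and lower bounds agree in order. This immediately yields $C_m^{\textrm{MPT+MPR}}(n) = \Theta(nT^3(n)/\sqrt{m})$.

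The second step is to justify the transceiver-range condition $T(n) \ge \Theta(\sqrt{\log n/n})$. This does not enter the upper bound, which holds for any $T(n)$, but is essential for the achievability argument underlying Theorem~\ref{the:lower_bound}. That lower bound is built on the cell construction of Section~\ref{subsec:lower}, where cells of area $T^2(n)/2$ tile the unit square and the cell graph must be connected for a multicast tree to span all $m$ destinations through relays. As noted following the cell construction, the cell graph is connected precisely when $T(n) = r(n) \ge \Theta(\sqrt{\log n/n})$, the standard critical radius for connectivity in random geometric graphs. I would therefore record this as the minimal range under which the achievability construction is valid, so that the $\Omega$ bound of Theorem~\ref{the:lower_bound}, and hence the matching $\Theta$ bound, holds.

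Since both constituent theorems are already established, there is no genuine analytical obstacle remaining; the argument is essentially a one-line combination. The only point requiring care is the bookkeeping: making explicit that the constant-$m$ assumption collapses the apparent $\sqrt{m}$ gap between the two bounds, and that the connectivity threshold is \emph{inherited} from the lower-bound construction rather than being an independent hypothesis. I would nonetheless retain the $\sqrt{m}$ factor in the final expression, even though it equals $\Theta(1)$ here, to keep notation consistent with the companion development for non-constant $m$ and to make the dependence transparent for the subsequent substitution $T(n) = \Theta(\sqrt{\log n/n})$ that produces the $\Theta(\log^{3/2} n/\sqrt{n})$ corollary.
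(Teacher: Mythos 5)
Your proposal matches the paper's own argument: the paper derives Theorem~\ref{the:tight_bound} by directly combining the upper bound of Theorem~\ref{the:upper_bound} with the achievable lower bound of Theorem~\ref{the:lower_bound}, with the connectivity condition $T(n)\ge\Theta\left(\sqrt{\log n/n}\right)$ inherited from the cell-based achievability construction, exactly as you describe. Your additional bookkeeping about the constant-$m$ assumption collapsing the $\sqrt{m}$ gap is correct and consistent with the paper's (terser) presentation.
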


    The multicast throughput capacity with point-to-point communication is given by
   the following lemma \cite{WSG08a}.

    \begin{lemma}\label{lem:routing-capacity}
        In multicast with a constant number $m$ of destinations, without MPR or
        MPR ability, the capacity is
    \begin{equation}\label{eq:capacity-plainrouting}
         C_m^{\textrm{Routing}}(n)=\Theta\left(\frac{1}{\sqrt{m}nr(n)}\right)
    \end{equation}
    where, $r(n)\ge \Theta\left(\sqrt{\log n/n}\right)$. When
    $r(n)=\Theta\left(\sqrt{\log n/n}\right)$ for the minimum
    transmission range to guarantee the connectivity, then we obtain the
    maximum capacity as
    $C_m^{\textrm{Routing-Max}}(n)=\Theta\left(\frac{1}{\sqrt{m n\log
    n}}\right)$.
    \end{lemma}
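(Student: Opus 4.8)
The plan is to establish the claimed tight bound by deriving a matching upper and lower bound, mirroring the area/tree framework developed for the MPT+MPR case but specialized to the point-to-point protocol model, in which the Total Active Area degenerates to $\overline{\textrm{TAA}(\Delta,r(n))}=O(1)$ because each disk of radius $r(n)$ admits only a single transmitter--receiver pair. The central quantity is the number of point-to-point transmissions (hops) needed to deliver one packet from a source to its $m$ destinations, which I would tie to the Euclidean length of the multicast tree via Lemma~\ref{lem:EMST}.

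\emph{Upper bound.} First I would argue that any scheme must route each source packet along a connected subtree spanning the source and its $m$ sinks, so the number of single-hop transmissions per multicast is at least the tree length divided by the per-hop reach, i.e.\ $\Omega\!\left(\overline{\|\textrm{EMST}\|}/r(n)\right)=\Omega\!\left(\sqrt{m}/r(n)\right)$. The protocol-model exclusion constraint forces every successful reception to reserve a disk of area $\Theta(r^2(n))$ around its receiver, so at most $O(1/r^2(n))$ transmissions can be active simultaneously in the unit square. Equating the aggregate transmission demand of the $n$ sources, $n\,C_m(n)\cdot\Theta(\sqrt{m}/r(n))$, with this supply $O(1/r^2(n))$ yields
\begin{equation}
C_m(n)=O\!\left(\frac{1}{\sqrt{m}\,n\,r(n)}\right),
\end{equation}
which is the desired upper bound.

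\emph{Lower bound.} For achievability I would reuse the TDMA cell construction of Section~\ref{subsec:lower}: partition the unit square into cells of side $\Theta(r(n))$, color them with a constant number $L^2$ of colors so that co-colored cells do not interfere, and let each source relay its packet cell-by-cell along a tree whose cell-count is $O(\sqrt{m}/r(n))$. Using the connectivity hypothesis $r(n)\ge\Theta(\sqrt{\log n/n})$ together with the Chernoff bound of Eq.~(\ref{eqn:chernoff}) I would show that, w.h.p., every occupied cell is nonempty and the number of routing hops concentrates at $\Theta(\sqrt{m}/r(n))$, so the constant-factor TDMA loss does not affect the order. Counting the simultaneous transmissions supported and normalizing by $n$ recovers $\Omega\!\left(1/(\sqrt{m}\,n\,r(n))\right)$, matching the upper bound and giving the stated $\Theta$. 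Substituting the connectivity-limited value $r(n)=\Theta(\sqrt{\log n/n})$ then produces $C_m^{\textrm{Routing-Max}}(n)=\Theta\!\left(1/\sqrt{mn\log n}\right)$.

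\emph{Main obstacle.} The delicate step is making the hop-count estimate $\Theta(\sqrt{m}/r(n))$ rigorous and sharp in both directions at once: the lower bound on transmissions appeals to the Steele EMST estimate of Lemma~\ref{lem:EMST}, which is an \emph{expectation}, while the achievability side must convert that expected tree length into a w.h.p.\ bound on the number of cells actually traversed by a routing tree built over a \emph{random} node placement. Controlling the discrepancy between the continuous EMST length and the discrete cell-hop count---uniformly over all $n$ multicast sessions via the union bound---while preventing the connectivity and TDMA-reuse constants from inflating the order, is where the real work lies.
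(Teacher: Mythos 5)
The paper does not actually prove this lemma---it imports it verbatim from reference \cite{WSG08a}---so there is no in-paper proof to compare against; your argument is the natural specialization of the paper's own Section IV machinery (Lemma \ref{lem:important1} with $\overline{\textrm{TAA}}=O(1)$, the Steele bound $\overline{\|\textrm{EMST}\|}=\Theta(\sqrt{m})$, and the TDMA cell construction) to the point-to-point protocol model, and it is correct at the same level of rigor the paper uses for its MPT+MPR bounds. The only caveat worth noting is that for constant $m$ the factor $\sqrt{m}$ is itself $\Theta(1)$, so the concentration issue you flag as the ``main obstacle'' is benign here: the per-session tree length is deterministically $O(\sqrt{m})$ and the aggregate demand concentrates by independence across the $n$ sessions.
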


    Combining Theorem \ref{the:tight_bound} with Lemma
    \ref{lem:routing-capacity}, the gain of throughput capacity with MPT
    and MPR capability in wireless ad hoc networks can be stated as follows.

    \begin{theorem}
        In multicast with a constant number $m$ of destinations, with
        MPT and MPR ability, the gain of per-node throughput capacity compared with point-to-point communication is
    $\Theta\left(n^2T^4(n)\right)$, where, $T(n)=
    r(n)\ge\Theta\left(\sqrt{\log n/
        n}\right)$. When $T(n)=\Theta\left(\sqrt{\log n/
        n}\right)$, the gain of per-node capacity is at least
    $\Theta\left(\log^2
            n\right)$.
    \end{theorem}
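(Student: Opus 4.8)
The plan is to establish the gain as a direct ratio of the two capacity expressions that have already been derived, since the theorem is a comparison rather than a fresh scaling argument. First I would invoke Theorem~\ref{the:tight_bound}, which gives the MPT+MPR multicast capacity as $C_m^{\textrm{MPT+MPR}}(n)=\Theta\!\left(nT^3(n)/\sqrt{m}\right)$, together with Lemma~\ref{lem:routing-capacity}, which gives the point-to-point multicast capacity as $C_m^{\textrm{Routing}}(n)=\Theta\!\left(1/(\sqrt{m}\,n\,r(n))\right)$. The crucial observation that makes the comparison meaningful is that both schemes operate under the same common communication range, so I would set $T(n)=r(n)$ throughout; this ensures that the two networks differ only in the presence or absence of the MPR/MPT capability and not in their connectivity radius, so that the ratio genuinely isolates the effect of the new physical-layer ability.

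Next I would form the ratio $C_m^{\textrm{MPT+MPR}}(n)/C_m^{\textrm{Routing}}(n)$. The two $\sqrt{m}$ factors appearing in the denominators cancel, leaving $n T^3(n)\cdot n\,r(n)=n^2 T^3(n)r(n)$. Substituting $r(n)=T(n)$ collapses this to $\Theta\!\left(n^2 T^4(n)\right)$, which is the claimed order gain valid for any admissible range $T(n)=r(n)\ge\Theta\!\left(\sqrt{\log n/n}\right)$.

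Finally, to obtain the numerical consequence I would substitute the smallest connectivity-preserving range $T(n)=\Theta\!\left(\sqrt{\log n/n}\right)$ into the gain expression. Then $T^4(n)=\Theta\!\left(\log^2 n/n^2\right)$, so $n^2 T^4(n)=\Theta\!\left(\log^2 n\right)$. Because the gain $n^2T^4(n)$ is monotonically increasing in $T(n)$, this minimal range yields the smallest possible gain over the admissible set of ranges; hence the gain is \emph{at least} $\Theta(\log^2 n)$, which is exactly the stated bound.

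I do not anticipate a genuine obstacle here, as the argument reduces to a one-line algebraic manipulation of two previously established tight bounds. The only points requiring care are the bookkeeping that forces $T(n)=r(n)$ so that the comparison isolates the contribution of MPR/MPT, and the monotonicity remark that justifies the ``at least'' phrasing when $T(n)$ is taken at the connectivity threshold rather than above it.
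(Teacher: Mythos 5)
Your proposal is correct and matches the paper's approach exactly: the paper gives no separate proof, simply stating the theorem as the result of "combining Theorem~\ref{the:tight_bound} with Lemma~\ref{lem:routing-capacity}," which is precisely the ratio computation you carry out. Your added remarks on setting $T(n)=r(n)$ and on the monotonicity of $n^2T^4(n)$ justifying the ``at least'' phrasing are sound and, if anything, make the argument more explicit than the paper does.
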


\section{Capacity with NC, MPT and MPR}\label{sec:nc-mpt-mpr}

We now study the multi-source multicast capacity of a wireless network
{\em in the absence of interference} when nodes use NC. The results we
present  serve as an upper-bound for what can be achieved
by combining NC, MPT and MPR in the presence of interference. Our
arguments are generic and can be used to deduce upper bounds for the
multicast capacity of other interesting cases where NC is used along
with only one of MPT or MPR, or even the scenario where NC is used
with traditional single packet transmission and reception.

We deduce the bounds for the case of multi-source multicasting by reducing it to a
suitable unicast routing problem. Under the reduction, an upper
bound for the unicast problem also serves for the original multicast
routing problem. Thus consider the following simple yet powerful
lemma

\begin{lemma} \label{multi-to-uni}
Consider a network with $n$ nodes $V = \{a_1, \dots, a_n\}$ and $k$
multicast sessions. Each session consists of one of the $n$ nodes
acting as a source with an arbitrary finite subset of $V$ acting as the set
of destinations. Let $s_i$ be the source of the $i^{th}$ session and
let $D_i = \{d_{i1}, \dots, d_{im_i}\}$ be the set of $m_i$
destinations. Now, there exists a joint routing-coding-scheduling
scheme that can realize a throughput of $\lambda_i$ for the $i^{th}$
session, i.e. $\lambda = [\lambda_1, \dots \lambda_k]$ is a
feasible rate vector. Then $\lambda$ is also a feasible vector for
any unicast routing problem in the same network such that the
traffic consists of $k$ unicast sessions with $s_i$ being the source
of the $i^{th}$ session and the destination $b_i$ is any arbitrary element of the set
$D_i$.
\end{lemma}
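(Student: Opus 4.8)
The plan is to prove feasibility by transferring the very scheme that realizes the multicast rates, exploiting that a single-destination requirement is strictly weaker than an all-destinations requirement. First I would fix the given joint routing-coding-scheduling scheme $\mathcal{S}$ that realizes the rate vector $\lambda$ for the $k$ multicast sessions. Under $\mathcal{S}$, for every session $i$ each node of $D_i$ reconstructs the source stream $W_i$ of $s_i$ at rate $\lambda_i$; in particular the distinguished node $b_i \in D_i$ reconstructs $W_i$ at rate $\lambda_i$. Because we operate in the absence of interference, the physical constraints that $\mathcal{S}$ must respect (per-node transmit and receive limits together with the fixed schedule) depend only on the network, not on which nodes are required to decode. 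Hence running $\mathcal{S}$ unchanged and simply discarding the decoding obligation at every node of $D_i \setminus \{b_i\}$ yields a scheme admissible for the unicast instance with sources $s_i$ and sinks $b_i$ that delivers each $W_i$ to $b_i$ at rate $\lambda_i$. This already shows $\lambda$ is feasible for the unicast problem once coding is permitted.

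The remaining, and only delicate, step is to pass from this coded unicast scheme to a pure routing scheme, since the statement asks for a unicast \emph{routing} problem. Here I would exploit that each unicast session now has a single sink. For a single source-sink commodity, max-flow--min-cut guarantees that the rate carried by any scheme, coded or not, is bounded by the min-cut and is achievable by a flow; so I would attempt, session by session, to re-express the link-time resources that $\mathcal{S}$ expends to feed $b_i$ as a flow of the same value from $s_i$ to $b_i$, obtained by a flow decomposition of the information actually reaching $b_i$. Summed over sessions, these flows would have to respect the same aggregate schedule that $\mathcal{S}$ used, so that the resulting routing scheme is feasible at rate $\lambda$.

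I expect the main obstacle to be precisely this coding-to-routing conversion in the multi-session setting: a single transmission of $\mathcal{S}$ may carry a coded combination serving several sessions at once, so the per-session flow decompositions must be shown to share the common link capacities without overbooking them, which is exactly the phenomenon by which network coding can in principle beat routing for multiple unicast. I would address it by charging each activated link to the sessions whose flows traverse it and checking that the total charge never exceeds the capacity the fixed schedule grants that link. If extracting an exact routing realization proves too strong, I would fall back to the weaker conclusion that suffices for the paper's purpose: since every cut separating $s_i$ from $b_i$ bounds the coded rate just as it bounds the routed rate, any upper bound subsequently derived for the unicast routing capacity (such as the area-based bounds used later) applies verbatim to the coded scheme, and is therefore automatically an upper bound on the original NC-based multicast capacity, which is the only direction in which this reduction is invoked.
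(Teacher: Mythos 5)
Your first paragraph is, in substance, the entirety of the paper's own proof: the paper disposes of this lemma with the single observation that if a rate is deliverable to every node of $D_i$, it is a fortiori deliverable to the one node $b_i\in D_i$, so the same joint routing-coding-scheduling scheme (with the decoding obligations at $D_i\setminus\{b_i\}$ simply dropped) witnesses feasibility for the unicast instance. Your second and third paragraphs, however, chase a conversion that the lemma does not require and that would not go through in general. The phrase ``unicast routing problem'' in the statement denotes the unicast \emph{traffic instance} (source $s_i$, sink $b_i$), not a restriction to pure store-and-forward schemes; this is confirmed by how the lemma is used in the proof of Theorem~\ref{the:nc-mpt-mpr}, where the cut capacity being bounded is explicitly that of ``unicast with point-to-point communication and NC,'' i.e., coding remains permitted on the unicast side, and the reduction is invoked only in the contrapositive direction (an upper bound on the coded unicast capacity transfers to the multicast capacity). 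The coding-to-routing extraction you sketch for the multi-session setting is, as you yourself observe, exactly the regime where network coding can in principle outperform routing for multiple unicasts, so ``charging each activated link to the sessions whose flows traverse it'' cannot be guaranteed not to overbook capacities; had the lemma actually demanded a routing realization, your argument would have a genuine gap there. Your fallback in the final sentences --- that every cut bound on the coded unicast scheme is all that is needed downstream --- is precisely the correct reading and is what the paper does, so the proposal lands in the right place after an unnecessary detour.
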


If a multicast capacity from a source to multiple destinations is
feasible, then clearly it is feasible to achieve the same capacity
to only any single node from this set of destinations.

\begin{lemma} \label{n-eachside}
Consider a random geometric network with $n$ nodes distributed
uniformly in a unit square. Consider a
decomposition of the unit-square into two disjoint regions $R$ and
$R^c$ such that the area of each region is of order $\Theta(1)$. Now
consider a multicast traffic scenario consisting of $n$ sessions
with each node being the source of a session  and $m$ randomly
chosen nodes being the destination of the session. We say that a
source satisfies property $P$ if the source belongs to region $R$
and at least one of its destination belongs to $R^c$ OR if the source
belongs to region $R^c$ and at least one of its destination belongs
to $R$. It can be shown that w.h.p the number of sources satisfying
property $P$ are $\Theta(n)$
\end{lemma}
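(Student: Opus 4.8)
The plan is to show that $\Theta(n)$ sources satisfy property $P$ by separately establishing a matching lower and upper bound, both holding with high probability. The quantity of interest is $N_P$, the number of sources (equivalently, sessions) satisfying $P$. For each session $i$ with source $s_i$ and destination set $D_i$ of size $m$, whether property $P$ holds depends only on which region (either $R$ or $R^c$) each of the $m+1$ nodes falls into. Since the $n$ nodes are placed independently and uniformly, and each region has area $\Theta(1)$, each node lands in $R$ with probability $p = |R| = \Theta(1)$ and in $R^c$ with probability $1-p = \Theta(1)$. The first step is therefore to compute, for a single fixed session, the probability $q$ that it satisfies $P$, and to check that $q = \Theta(1)$ bounded away from both $0$ and $1$.

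The key computation is that $q = \Pr[P]$ equals the probability that the source and at least one destination lie on opposite sides of the cut. Conditioning on the source being in $R$ (probability $p$), property $P$ fails only if \emph{all} $m$ destinations also lie in $R$, which happens with probability $p^m$; symmetrically for the source in $R^c$. Hence
\begin{equation}
q = p\left(1 - p^m\right) + (1-p)\left(1 - (1-p)^m\right).
\end{equation}
Since $m$ is a constant and $0 < p < 1$, each term is a positive constant, so $q = \Theta(1)$ strictly between $0$ and $1$. This makes $E[N_P] = nq = \Theta(n)$, which gives the right order in expectation and pins down the target.

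The remaining step is concentration: I must show $N_P$ is sharply concentrated around its mean so that $N_P = \Theta(n)$ holds w.h.p., not merely in expectation. The natural approach is to write $N_P = \sum_{i=1}^n \mathbf{1}_i$, where $\mathbf{1}_i$ is the indicator that session $i$ satisfies $P$. The main obstacle is that these indicators are \emph{not} independent: all sessions draw their source and destination labels from the same underlying set of $n$ placed nodes, so the events share randomness through the common node positions. However, the dependence is weak because each session's status depends only on the region-membership bits of its own $m+1$ participating nodes, and region membership of the $n$ nodes is mutually independent. The cleanest route is to invoke the same Chernoff-bound machinery already used in the paper (Lemma for the Chernoff bound, Eq.~(\ref{eqn:chernoff})): treat the region-membership of each node as an independent Bernoulli trial, express $N_P$ as a function of these $n$ independent bits, and observe that changing one node's membership bit alters $N_P$ by at most a constant (namely the constant number of sessions in which that node participates as source or destination, which is $O(m) = O(1)$ in expectation and tightly concentrated). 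This bounded-difference structure yields exponential concentration, so that $\Pr[\,|N_P - nq| > \delta n\,] \to 0$, giving $N_P = \Theta(n)$ w.h.p. and completing the argument.
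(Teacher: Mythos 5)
The paper states this lemma without any proof at all --- it is asserted with ``It can be shown that\ldots'' and the text moves directly on to Theorem~\ref{the:nc-mpt-mpr} --- so there is no authorial argument to compare yours against; your proposal supplies the missing reasoning, and it is essentially the right one. Your first step is sound: with $p=|R|=\Theta(1)$ bounded away from $0$ and $1$, the per-session probability $q = p\left(1-p^m\right) + (1-p)\left(1-(1-p)^m\right)$ is a constant in $(0,1)$ for constant $m$, so $E[N_P]=\Theta(n)$. (The destinations are drawn without replacement from the node set, but their positions remain i.i.d.\ uniform, so the independence you use for the $p^m$ factor is legitimate up to negligible corrections.)

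The one soft spot is the concentration step. You assert that changing one node's region-membership bit alters $N_P$ ``by at most a constant,'' but that is not a deterministic bound: a node is the source of exactly one session, yet it can be a destination of many sessions, and the maximum such multiplicity grows like $\Theta(\log n/\log\log n)$ w.h.p.\ (a balls-in-bins maximum load), not $O(1)$. Since McDiarmid-type bounded-difference inequalities require worst-case difference bounds, the argument as written does not close. Two standard repairs work: (i) a second-moment argument --- two sessions are dependent only if they share a node, which for a fixed pair happens with probability $O(m/n)$, so $\mathrm{Var}(N_P)=O(n)$ and Chebyshev gives $P\left[|N_P-nq|>\delta n\right]=O(1/n)\to 0$, which is all the lemma needs; or (ii) condition on the w.h.p.\ event that no node is a destination of more than $O(\log n)$ sessions, after which $\sum_v c_v^2 = O(n\log n)$ and the bounded-difference inequality still yields probability $e^{-\Omega(n/\log n)}$ of deviation. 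With either repair your proof is complete, and it is almost certainly the argument the authors had in mind.
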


\begin{theorem} \label{the:nc-mpt-mpr}
In a wireless ad hoc network formed by $n$ nodes distributed randomly
in a unit square with traffic formed by each node acting as source for
a multicast sessions with $m=\Theta(1)$ randomly chosen nodes as
destinations, the per-session multicast capacities are
    \begin{eqnarray}
    C_m^{\textrm {NC+PTP}} &=& \Theta\left(\frac{1}{nT(n)}\right) \\
    C_m^{\textrm{NC+MPT}} = C_m^{\textrm{NC+MPR}} &=& \Theta\left(T(n)\right)\\
    C_m^{\textrm{NC+MPT+MPR}} &=& \Theta\left(nT^3(n)\right)
    \end{eqnarray}
where NC + PTP denotes the use of NC with point-to-point communication (no MPT or MPR),
i.e., a node can only transmit or receive at most one packet at a time.
\end{theorem}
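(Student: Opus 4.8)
The plan is to prove each of the three expressions by matching an upper and a lower bound, with the upper bound carrying all the content: it is what shows that coding buys nothing in order. The lower bounds are essentially free, because a scheme is always free not to code, so NC combined with any physical-layer capability achieves at least the capacity of plain routing with that same capability. Thus $C_m^{\textrm{NC+MPT+MPR}}=\Omega(nT^3(n))$ follows from Theorem \ref{the:tight_bound}, $C_m^{\textrm{NC+PTP}}=\Omega(1/(nT(n)))$ follows from the routing bound of Lemma \ref{lem:routing-capacity} with $m=\Theta(1)$ and $r(n)=T(n)$, and the single-capability cases $\Omega(T(n))$ follow from the analogous MPT-only and MPR-only constructions. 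The remaining task is to produce upper bounds of the same order.

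For the upper bound I would run a cut-set argument, made clean by the multicast-to-unicast reduction of Lemma \ref{multi-to-uni}. Fix a balanced cut by splitting the unit square into two halves $R$ and $R^c$, each of area $\Theta(1)$ and separated by a boundary of length $\Theta(1)$. By Lemma \ref{n-eachside}, w.h.p. $\Theta(n)$ of the $n$ sessions have their source on one side and a destination on the other; invoking Lemma \ref{multi-to-uni} to retain such a destination turns each into a unicast session whose source and sink straddle the cut. The information produced by these sources that must reach the far side has to physically cross the boundary, so the aggregate rate of the straddling sessions is at most the cut capacity $C_{\mathrm{cut}}$, defined as the largest number of information-bearing transmissions that can simultaneously cross the boundary (each carrying $\Theta(1)$ bits). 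The crucial point is that this cut-set inequality is information-theoretic and survives network coding: coding may mix bits but cannot create capacity across a cut. Hence $\Theta(n)\,C_m(n)\le C_{\mathrm{cut}}$, i.e. $C_m(n)=O(C_{\mathrm{cut}}/n)$.

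It remains to evaluate $C_{\mathrm{cut}}$ in each model. Only nodes within distance $T(n)$ of the boundary can send across it; they occupy a strip of area $\Theta(T(n))$ and number $\Theta(nT(n))$ on each side. Under MPT+MPR every such left node may transmit to all of its in-range partners across the cut and every such right node may receive from all of its in-range partners, and a near-boundary node has $\Theta(nT^2(n))$ in-range partners on the opposite side, so $C_{\mathrm{cut}}=\Theta(nT(n))\cdot\Theta(nT^2(n))=\Theta(n^2T^3(n))$ and $C_m(n)=O(nT^3(n))$. With only MPT (dually, only MPR) each receiver is fed by a single transmitter (each transmitter feeds a single receiver), so the crossing transmissions number only $\Theta(nT(n))$ and $C_m(n)=O(T(n))$. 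For PTP, half-duplex together with the single-packet and protocol-model spacing constraints force the simultaneously active crossing links to be $\Theta(T(n))$-separated along the boundary, capping their number at $\Theta(1/T(n))$ and giving $C_m(n)=O(1/(nT(n)))$. Each upper bound meets its lower bound, yielding the three stated $\Theta$ expressions.

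The main obstacle is this upper-bound step, and two points inside it need care. The first is legitimizing the cut-set inequality in the coded, wireless, many-session regime; this is precisely why Lemma \ref{multi-to-uni} is applied first, since for a family of unicast sessions the bits crossing the cut are genuinely distinct commodities and no coding strategy can squeeze their total crossing rate below $C_{\mathrm{cut}}$. The second is computing $C_{\mathrm{cut}}$ honestly for MPT+MPR: one must check that the $\Theta(nT(n))$ near-boundary transmitters each reach $\Theta(nT^2(n))$ partners across the cut and that all $\Theta(n^2T^3(n))$ of these transmissions are simultaneously admissible, so the count is tight rather than an over-estimate. The supporting facts — that $\Theta(n)$ sessions straddle the cut and that the strips hold $\Theta(nT(n))$ nodes — are routine consequences of the uniform node placement and the Chernoff concentration already used for Lemma \ref{lem:total}.
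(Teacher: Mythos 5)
Your proposal follows essentially the same route as the paper's own proof sketch: the multicast-to-unicast reduction of Lemma \ref{multi-to-uni} combined with the $\Theta(n)$ straddling sessions of Lemma \ref{n-eachside}, followed by a sparsity-cut capacity computation yielding $\Theta(1/T(n))$, $\Theta(nT(n))$, and $\Theta(n^2T^3(n))$ crossing transmissions for the three cases, divided by the $\Theta(n)$ pairs across the cut. Your explicit statement of the matching lower bounds (NC is free not to code, so the routing achievability results apply) and your strip-counting justification of the cut capacities fill in details the paper leaves to citation, but the argument is the same.
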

\begin{proof} (Sketch)
For any sparsity cut of the unit area as illustrated in Fig.
\ref{fig:proof}, the middle line induces a sparsity cut, lemmas
\ref{n-eachside}. \ref{multi-to-uni} tell us that we can construct a
unicast routing problem satisfying the property that any rate for
the unicast problem is feasible for the original multicast problem
and we have $\Theta(n)$ source-destination pairs across the cut.
Thus, the capacity of the sparsity cut provides an upper bound for
the unicast problem, which can in turn be used to provide a bound
for the multicast problem. Liu et al. \cite{LGT07} showed that the
maximum number of packets that can be simultaneously transmitted
across the cut is $\Theta\left(\frac{1}{T(n)}\right)$ for the case
of unicast with point-to-point communication and NC.  With similar
arguments, we can show that the combination of NC+MPT or NC+MPR
allows us to transmit a maximum of $\Theta\left(nT(n)\right)$
packets across the cut. Finally, we can extend such arguments to
show that the combination of NC+MPT+MPR allows us to simultaneously
transmit a maximum of $\Theta\left(n^2T^3(n)\right)$ packets across
the cut. The result of the theorem then follows from the fact that
the cut capacity has to be divided among the $\Theta(n)$
source-destination pairs across the cut.
\end{proof}
\begin{figure}
    \center
      \includegraphics[width=2.3in]{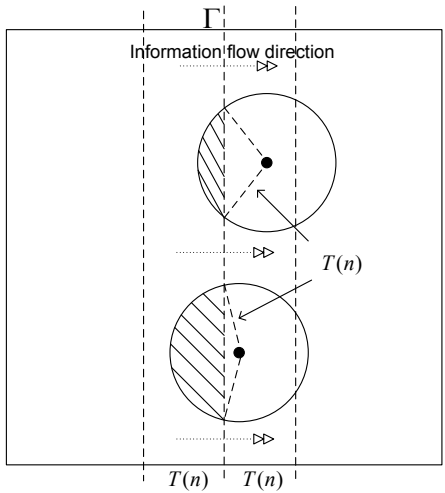}\\
      \caption{All the nodes in the shaded region can send a message simultaneously}
    \label{fig:proof}
    \end{figure}

\section{Discussion}\label{sec:dis}

By combining the
results from theorems \ref{the:tight_bound} and
\ref{the:nc-mpt-mpr}, the main contribution of this paper is stated in the following theorem.

\begin{theorem}\label{the:final}
In wireless ad hoc networks with multi-pair multicast sessions and
with a finite number of destinations for each source ($m$), the
throughput capacity utilizing NC, MPT and MPR capabilities for all
nodes is the same order as when the nodes are endowed only with MPT
and MPR.
\begin{equation}\label{eq:final}
    {C}_m^{\textrm{MPT+MPR+NC}}(n) ={C}_m^{\textrm{MPT+MPR}}(n)
\end{equation}
\end{theorem}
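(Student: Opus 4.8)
The plan is to establish \eqref{eq:final} by sandwiching $C_m^{\textrm{MPT+MPR+NC}}(n)$ between matching upper and lower bounds, both of order $nT^3(n)$, using only the already-proved Theorems \ref{the:tight_bound} and \ref{the:nc-mpt-mpr}. First I would specialize Theorem \ref{the:tight_bound} to the present setting: since $m=\Theta(1)$, the factor $\sqrt{m}$ is a constant that is absorbed into the order notation, so that $C_m^{\textrm{MPT+MPR}}(n) = \Theta(nT^3(n))$.

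Next I would establish the lower bound on the coded capacity by a ``coding cannot hurt'' argument. Network coding is an additional capability layered on top of MPT and MPR, so any joint routing--scheduling scheme that is feasible for MPT+MPR remains feasible for MPT+MPR+NC; the relays simply forward packets without encoding them. Consequently every rate achievable with MPT+MPR alone is achievable with NC present, which gives $C_m^{\textrm{MPT+MPR+NC}}(n) = \Omega(nT^3(n))$.

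For the matching upper bound I would invoke Theorem \ref{the:nc-mpt-mpr} together with the remark opening Section \ref{sec:nc-mpt-mpr} that the cut computation is carried out in the absence of interference and therefore overestimates the capacity of the physical, interference-limited network. Hence the order $nT^3(n)$ obtained there is an upper bound for the realistic system, $C_m^{\textrm{MPT+MPR+NC}}(n) = O(nT^3(n))$. Combining the two bounds closes the sandwich, $C_m^{\textrm{MPT+MPR+NC}}(n) = \Theta(nT^3(n)) = C_m^{\textrm{MPT+MPR}}(n)$, which is precisely \eqref{eq:final}.

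The hard part is not any calculation but keeping straight the direction in which each cited result is applied. Theorem \ref{the:nc-mpt-mpr} is written with a $\Theta$, yet because it is derived under the interference-free idealization it may only be used as an upper bound for the true coded capacity; the achievability half of the sandwich must instead come from the explicit non-coded TDMA scheme of Section \ref{sec:mpt-mpr}. Making this asymmetry explicit---lower bound from ``coding cannot hurt,'' upper bound from the interference-free cut---is the crux that forces the two bounds to meet at the same order, and thereby shows that network coding provides no order gain.
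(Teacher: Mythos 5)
Your proof is correct and follows essentially the same route as the paper, which simply combines Theorem \ref{the:tight_bound} (giving $C_m^{\textrm{MPT+MPR}}(n)=\Theta(nT^3(n))$ for constant $m$) with Theorem \ref{the:nc-mpt-mpr} in a single sentence. Your version is in fact more careful than the paper's: you correctly note that the interference-free cut argument of Theorem \ref{the:nc-mpt-mpr} can only be used as an upper bound on the true coded capacity, and that the matching lower bound must come from the non-coded TDMA achievability scheme together with the ``coding cannot hurt'' monotonicity observation --- a directional subtlety the paper leaves implicit.
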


It is also important to emphasize that, as
Theorem~\ref{the:nc-mpt-mpr} shows, NC does not provide any order
capacity gain for multi-source multicasting when the size of receiver
groups is $m=\Theta(1)$ and nodes use point-to-point communication.
Hence,
the result in Theorem~\ref{the:final}
implies that NC does not provide an order capacity gain
when either MPT and MPR are used, or point-to-point communication is
used, and that MPT and MPR are the real contributing factor for order
capacity increases in wireless networks.

\section{Acknowledgments}
This work was partially sponsored by the U.S. Army Research Office
under grants W911NF-04-1-0224 and W911NF-05-1-0246, by the National
Science Foundation under grant CCF-0729230, by the Defense Advanced
Research Projects Agency through Air Force Research Laboratory
Contract FA8750-07-C-0169, and by the Baskin Chair of Computer
Engineering. The views and conclusions contained in this document
are those of the authors and should not be interpreted as
representing the official policies, either expressed or implied, of
the U.S. Government.

\bibliographystyle{IEEEtran}
\bibliography{database}
\balance

\end{document}